\documentclass[journal]{IEEEtran}
\usepackage{cite}
\usepackage{amsmath,amssymb,amsfonts,bm}
\usepackage[linesnumbered,ruled,vlined]{algorithm2e}
\usepackage{mathrsfs}
\usepackage{amsthm}
\usepackage{dsfont}
\usepackage{algorithmic}
\usepackage{subfigure}
\usepackage{graphicx}
\usepackage{textcomp}
\usepackage{threeparttable}
\usepackage{xcolor}
\usepackage{bbm}
\usepackage{makecell}
\usepackage[nolist,withpage]{acronym}
\usepackage{array}
\usepackage{tikz}
\usetikzlibrary{spy}
\usepackage{pgfplots}
\pgfplotsset{compat=newest}

\definecolor{copperrose}{rgb}{0.6, 0.4, 0.4}
\definecolor{azure}{rgb}{0.0, 0.5, 1.0}
\definecolor{ashgrey}{rgb}{0.7, 0.75, 0.71}
\definecolor{chestnut}{rgb}{0.8, 0.36, 0.36}
\definecolor{airforceblue}{rgb}{0.36, 0.54, 0.66}
\definecolor{cadmiumorange}{rgb}{0.93, 0.53, 0.18}
\definecolor{bleudefrance}{rgb}{0.19, 0.55, 0.91}
\definecolor{carolinablue}{rgb}{0.6, 0.73, 0.89}
\definecolor{blue(ncs)}{rgb}{0.0, 0.53, 0.74}
\definecolor{dodgerblue}{rgb}{0.12, 0.56, 1.0}
\definecolor{cssgreen}{rgb}{0.0, 0.5, 0.0}
\definecolor{cadmiumgreen}{rgb}{0.0, 0.42, 0.24}
\definecolor{cadmiumorange}{rgb}{0.93, 0.53, 0.18}
\definecolor{amaranth}{rgb}{0.9, 0.17, 0.31}
\definecolor{bluegray}{rgb}{0.4, 0.6, 0.8}
\definecolor{cerulean}{rgb}{0.0, 0.48, 0.65}
\definecolor{ceil}{rgb}{0.57, 0.63, 0.81}

\usepackage{etoolbox}
\makeatletter
\newif\if@in@acrolist
\AtBeginEnvironment{acronym}{\@in@acrolisttrue}
\newrobustcmd{\LU}[2]{\if@in@acrolist#1\else#2\fi}

\newcommand{\ACF}[1]{{\@in@acrolisttrue\acf{#1}}}

\begin{document}


\begin{acronym}[LTE-Advanced]
  \acro{2G}{Second Generation}
  \acro{3-DAP}{3-Dimensional Assignment Problem}
  \acro{3G}{3$^\text{rd}$~Generation}
  \acro{3GPP}{3$^\text{rd}$~Generation Partnership Project}
  \acro{4G}{4$^\text{th}$~Generation}
  \acro{5G}{5$^\text{th}$~Generation}
  \acro{AA}{Antenna Array}
  \acro{AC}{Admission Control}
  \acro{ACS}{Add-Compare-Select}
  \acro{AD}{Attack-Decay}
  \acro{ADC}{analog-to-digital converter}
  \acro{ADMM}{alternating direction method of multipliers}
  \acro{ADSL}{Asymmetric Digital Subscriber Line}
  \acro{AHW}{Alternate Hop-and-Wait}
  \acro{AI}{Artificial Intelligence}
  \acro{AirComp}{over-the-air computation}
  \acro{AMC}{Adaptive Modulation and Coding}
  \acro{ANN}{artificial neural network}
  \acro{AP}{\LU{A}{a}ccess \LU{P}{p}oint}
  \acro{APA}{Adaptive Power Allocation}
  \acro{ARMA}{Autoregressive Moving Average}
  \acro{ARQ}{\LU{A}{a}utomatic \LU{R}{r}epeat \LU{R}{r}equest}
  \acro{ATES}{Adaptive Throughput-based Efficiency-Satisfaction Trade-Off}
  \acro{AWGN}{additive white Gaussian noise}
  \acro{BAA}{\LU{B}{b}roadband \LU{A}{a}nalog \LU{A}{a}ggregation}
  \acro{BB}{Branch and Bound}
  \acro{BCD}{block coordinate descent}
  \acro{BCJR}{Bahl-Cocke-Jelinek-Raviv}
  \acro{BD}{Block Diagonalization}
  \acro{BER}{Bit Error Rate}
  \acro{BF}{Best Fit}
  \acro{BFD}{bidirectional full duplex}
  \acro{BLER}{BLock Error Rate}
  \acro{BPC}{Binary Power Control}
  \acro{BPSK}{Binary Phase-Shift Keying}
  \acro{BRA}{Balanced Random Allocation}
  \acro{BS}{base station}
  \acro{BSUM}{block successive upper-bound minimization}
  \acro{CAP}{Combinatorial Allocation Problem}
  \acro{CAPEX}{Capital Expenditure}
  \acro{CBF}{Coordinated Beamforming}
  \acro{CBR}{Constant Bit Rate}
  \acro{CBS}{Class Based Scheduling}
  \acro{CC}{Congestion Control}
  \acro{CCCP}{constrained convex-concave procedure}
  \acro{CDF}{Cumulative Distribution Function}
  \acro{CDMA}{Code-Division Multiple Access}
  \acro{CE}{\LU{C}{c}hannel \LU{E}{e}stimation}
  \acro{CL}{Closed Loop}
  \acro{CLPC}{Closed Loop Power Control}
  \acro{CML}{centralized machine learning}
  \acro{CNR}{Channel-to-Noise Ratio}
  \acro{CNN}{\LU{C}{c}onvolutional \LU{N}{n}eural \LU{N}{n}etwork}
  \acro{CP}{computation point}
  \acro{CPA}{Cellular Protection Algorithm}
  \acro{CPICH}{Common Pilot Channel}
  \acro{CoCoA}{\LU{C}{c}ommunication efficient distributed dual \LU{C}{c}oordinate \LU{A}{a}scent}
  \acro{CoMAC}{\LU{C}{c}omputation over \LU{M}{m}ultiple-\LU{A}{a}ccess \LU{C}{c}hannels}
  \acro{CoMP}{Coordinated Multi-Point}
  \acro{CQI}{Channel Quality Indicator}
  \acro{CRM}{Constrained Rate Maximization}
	\acro{CRN}{Cognitive Radio Network}
  \acro{CS}{Coordinated Scheduling}
  \acro{CSI}{\LU{C}{c}hannel \LU{S}{s}tate \LU{I}{i}nformation}
  \acro{CSMA}{\LU{C}{c}arrier \LU{S}{s}ense \LU{M}{m}ultiple \LU{A}{a}ccess}
  \acro{CUE}{Cellular User Equipment}
  \acro{D2D}{device-to-device}
  \acro{DAC}{digital-to-analog converter}
  \acro{DC}{difference of convex}
  \acro{DCA}{Dynamic Channel Allocation}
  \acro{DE}{Differential Evolution}
  \acro{DFT}{Discrete Fourier Transform}
  \acro{DIST}{Distance}
  \acro{DL}{downlink}
  \acro{DMA}{Double Moving Average}
  \acro{DML}{Distributed Machine Learning}
  \acro{DMRS}{demodulation reference signal}
  \acro{D2DM}{D2D Mode}
  \acro{DMS}{D2D Mode Selection}
  \acro{DPC}{Dirty Paper Coding}
  \acro{DRA}{Dynamic Resource Assignment}
  \acro{DSA}{Dynamic Spectrum Access}
  \acro{DSGD}{\LU{D}{d}istributed \LU{S}{s}tochastic \LU{G}{g}radient \LU{D}{d}escent}
  \acro{DSM}{Delay-based Satisfaction Maximization}
  \acro{ECC}{Electronic Communications Committee}
  \acro{EFLC}{Error Feedback Based Load Control}
  \acro{EI}{Efficiency Indicator}
  \acro{eNB}{Evolved Node B}
  \acro{EPA}{Equal Power Allocation}
  \acro{EPC}{Evolved Packet Core}
  \acro{EPS}{Evolved Packet System}
  \acro{E-UTRAN}{Evolved Universal Terrestrial Radio Access Network}
  \acro{ES}{Exhaustive Search}
  \acro{FC}{\LU{F}{f}usion \LU{C}{c}enter}
  \acro{FD}{\LU{F}{f}ederated \LU{D}{d}istillation}
  \acro{FDD}{frequency divisionov duplex}
  \acro{FDM}{Frequency Division Multiplexing}
  \acro{FDMA}{\LU{F}{f}requency \LU{D}{d}ivision \LU{M}{m}ultiple \LU{A}{a}ccess}
  \acro{FedAvg}{\LU{F}{f}ederated \LU{A}{a}veraging}
  \acro{FER}{Frame Erasure Rate}
  \acro{FF}{Fast Fading}
  \acro{FL}{federated learning}
  \acro{FSB}{Fixed Switched Beamforming}
  \acro{FSJD}{full-state joint decoder}
  \acro{FSK}{Frequency-Shift Keying}
  \acro{FST}{Fixed SNR Target}
  \acro{FTP}{File Transfer Protocol}
  \acro{GA}{Genetic Algorithm}
  \acro{GBR}{Guaranteed Bit Rate}
  \acro{GLR}{Gain to Leakage Ratio}
  \acro{GOS}{Generated Orthogonal Sequence}
  \acro{GPL}{GNU General Public License}
  \acro{GRP}{Grouping}
  \acro{HARQ}{Hybrid Automatic Repeat Request}
  \acro{HiCoMAC}{Histogram-State Coded Multiple Access Computing}
  \acro{HD}{half-duplex}
  \acro{HMS}{Harmonic Mode Selection}
  \acro{HOL}{Head Of Line}
  \acro{HSDPA}{High-Speed Downlink Packet Access}
  \acro{HSPA}{High Speed Packet Access}
  \acro{HTTP}{HyperText Transfer Protocol}
  \acro{ICMP}{Internet Control Message Protocol}
  \acro{ICI}{Intercell Interference}
  \acro{ID}{Identification}
  \acro{IETF}{Internet Engineering Task Force}
  \acro{ILP}{Integer Linear Program}
  \acro{JRAPAP}{Joint RB Assignment and Power Allocation Problem}
  \acro{UID}{Unique Identification}
  \acro{IABP}{importance-adaptive bit-partitioning}
  \acro{IID}{\LU{I}{i}ndependent and \LU{I}{i}dentically \LU{D}{d}istributed}
  \acro{IIR}{Infinite Impulse Response}
  \acro{ILP}{Integer Linear Problem}
  \acro{IMT}{International Mobile Telecommunications}
  \acro{INV}{Inverted Norm-based Grouping}
  \acro{IoT}{Internet of Things}
  \acro{IP}{Integer Programming}
  \acro{IPv6}{Internet Protocol Version 6}
  \acro{ISD}{Inter-Site Distance}
  \acro{ISI}{Inter Symbol Interference}
  \acro{ITU}{International Telecommunication Union}
  \acro{JAFM}{joint assignment and fairness maximization}
  \acro{JAFMA}{joint assignment and fairness maximization algorithm}
  \acro{JOAS}{Joint Opportunistic Assignment and Scheduling}
  \acro{JOS}{Joint Opportunistic Scheduling}
  \acro{JP}{Joint Processing}
	\acro{JS}{Jump-Stay}
  \acro{KKT}{Karush-Kuhn-Tucker}
  \acro{L3}{Layer-3}
  \acro{LAC}{Link Admission Control}
  \acro{LA}{Link Adaptation}
  \acro{LC}{Load Control}
  \acro{LDC}{\LU{L}{l}earning-\LU{D}{d}riven \LU{C}{c}ommunication}
  \acro{LDPC}{low-density parity-check}
  \acro{LOS}{line of sight}
  \acro{LP}{Linear Programming}
  \acro{LSB}{least significant bit}
  \acro{LTE}{Long Term Evolution}
	\acro{LTE-A}{\ac{LTE}-Advanced}
  \acro{LTE-Advanced}{Long Term Evolution Advanced}
  \acro{LRA}{Low Rank Approximation}
  \acro{M2M}{Machine-to-Machine}
  \acro{MAC}{multiple access channel}
  \acro{MAP}{maximum a posterior}
  \acro{MANET}{Mobile Ad hoc Network}
  \acro{MC}{Modular Clock}
  \acro{MCS}{Modulation and Coding Scheme}
  \acro{MDB}{Measured Delay Based}
  \acro{MDI}{Minimum D2D Interference}
  \acro{MF}{Matched Filter}
  \acro{MG}{Maximum Gain}
  \acro{MH}{Multi-Hop}
  \acro{MIMO}{\LU{M}{m}ultiple \LU{I}{i}nput \LU{M}{m}ultiple \LU{O}{o}utput}
  \acro{MINLP}{mixed integer nonlinear programming}
  \acro{MIP}{mixed integer programming}
  \acro{MISO}{multiple input single output}
  \acro{ML}{machine learning}
  \acro{MLE}{maximum likelihood estimator}
  \acro{MLWDF}{Modified Largest Weighted Delay First}
  \acro{MME}{Mobility Management Entity}
  \acro{MMSE}{minimum mean squared error}
  \acro{MOS}{Mean Opinion Score}
  \acro{MPF}{Multicarrier Proportional Fair}
  \acro{MRA}{Maximum Rate Allocation}
  \acro{MR}{Maximum Rate}
  \acro{MRC}{Maximum Ratio Combining}
  \acro{MRT}{Maximum Ratio Transmission}
  \acro{MRUS}{Maximum Rate with User Satisfaction}
  \acro{MS}{Mode Selection}
  \acro{MSB}{most significant bit}
  \acro{MSE}{\LU{M}{m}ean \LU{S}{s}quared \LU{E}{e}rror}
  \acro{MSI}{Multi-Stream Interference}
  \acro{MTC}{Machine-Type Communication}
  \acro{MTSI}{Multimedia Telephony Services over IMS}
  \acro{MTSM}{Modified Throughput-based Satisfaction Maximization}
  \acro{MU-MIMO}{Multi-User Multiple Input Multiple Output}
  \acro{MU}{Multi-User}
  \acro{NAS}{Non-Access Stratum}
  \acro{NB}{Node B}
	\acro{NCL}{Neighbor Cell List}
  \acro{NLP}{Nonlinear Programming}
  \acro{NLOS}{non-line of sight}
  \acro{NMSE}{normalized mean square error}
  \acro{NOMA}{\LU{N}{n}on-\LU{O}{o}rthogonal \LU{M}{m}ultiple \LU{A}{a}ccess}
  \acro{NORM}{Normalized Projection-based Grouping}
  \acro{NP}{non-polynomial time}
  \acro{NRT}{Non-Real Time}
  \acro{NSPS}{National Security and Public Safety Services}
  \acro{O2I}{Outdoor to Indoor}
  \acro{OFDMA}{\LU{O}{o}rthogonal \LU{F}{f}requency \LU{D}{d}ivision \LU{M}{m}ultiple \LU{A}{a}ccess}
  \acro{OFDM}{Orthogonal Frequency Division Multiplexing}
  \acro{OFPC}{Open Loop with Fractional Path Loss Compensation}
	\acro{O2I}{Outdoor-to-Indoor}
  \acro{OL}{Open Loop}
  \acro{OLPC}{Open-Loop Power Control}
  \acro{OL-PC}{Open-Loop Power Control}
  \acro{OPEX}{Operational Expenditure}
  \acro{ORB}{Orthogonal Random Beamforming}
  \acro{JO-PF}{Joint Opportunistic Proportional Fair}
  \acro{OSI}{Open Systems Interconnection}
  \acro{PAIR}{D2D Pair Gain-based Grouping}
  \acro{PAM}{pulse amplitude modulation}
  \acro{PAPR}{Peak-to-Average Power Ratio}
  \acro{P2P}{Peer-to-Peer}
  \acro{PC}{Power Control}
  \acro{PCI}{Physical Cell ID}
  \acro{PDCCH}{physical downlink control channel}
  \acro{PDD}{penalty dual decomposition}
  \acro{PDF}{Probability Density Function}
  \acro{PER}{Packet Error Rate}
  \acro{PF}{Proportional Fair}
  \acro{P-GW}{Packet Data Network Gateway}
  \acro{PL}{Pathloss}
  \acro{RLT}{reformulation linearization technique}
  \acro{PRB}{Physical Resource Block}
  \acro{PROJ}{Projection-based Grouping}
  \acro{ProSe}{Proximity Services}
  \acro{PS}{\LU{P}{p}arameter \LU{S}{s}erver}
  \acro{PSO}{Particle Swarm Optimization}
  \acro{PUCCH}{physical uplink control channel}
  \acro{PZF}{Projected Zero-Forcing}
  \acro{QAM}{quadrature amplitude modulation}
  \acro{QoS}{quality of service}
  \acro{QPSK}{quadrature phase shift keying}
  \acro{QCQP}{quadratically constrained quadratic programming}
  \acro{RAISES}{Reallocation-based Assignment for Improved Spectral Efficiency and Satisfaction}
  \acro{RAN}{Radio Access Network}
  \acro{RA}{Resource Allocation}
  \acro{RAT}{Radio Access Technology}
  \acro{RATE}{Rate-based}
  \acro{RB}{resource block}
  \acro{RBG}{Resource Block Group}
  \acro{REF}{Reference Grouping}
  \acro{ReLU}{rectified linear unit}
  \acro{ReMAC}{repetition for multiple access computing}
  \acro{RF}{radio frequency}
  \acro{RLC}{Radio Link Control}
  \acro{RM}{Rate Maximization}
  \acro{RNC}{Radio Network Controller}
  \acro{RND}{Random Grouping}
  \acro{RRA}{Radio Resource Allocation}
  \acro{RRM}{\LU{R}{r}adio \LU{R}{r}esource \LU{M}{m}anagement}
  \acro{RSCP}{Received Signal Code Power}
  \acro{RSRP}{reference signal receive power}
  \acro{RSRQ}{Reference Signal Receive Quality}
  \acro{RR}{Round Robin}
  \acro{RRC}{Radio Resource Control}
  \acro{RSSI}{received signal strength indicator}
  \acro{RT}{Real Time}
  \acro{RU}{Resource Unit}
  \acro{RUNE}{RUdimentary Network Emulator}
  \acro{RV}{Random Variable}
  \acro{SA}{simulated annealing}
  \acro{SAA}{Small Argument Approximation}
  \acro{SAC}{Session Admission Control}
  \acro{SCM}{Spatial Channel Model}
  \acro{SC-FDMA}{Single Carrier - Frequency Division Multiple Access}
  \acro{SD}{Soft Dropping}
  \acro{S-D}{Source-Destination}
  \acro{SDPC}{Soft Dropping Power Control}
  \acro{SDMA}{Space-Division Multiple Access}
  \acro{SDR}{semidefinite relaxation}
  \acro{SDP}{semidefinite programming}
  \acro{SeMAC}{sequential modulation for AirComp}
  \acro{SeMAC-PA}{SeMAC with power adaptation}
  \acro{SER}{Symbol Error Rate}
  \acro{SES}{Simple Exponential Smoothing}
  \acro{S-GW}{Serving Gateway}
  \acro{SGD}{\LU{S}{s}tochastic \LU{G}{g}radient \LU{D}{d}escent}  
  \acro{SINR}{signal-to-interference-plus-noise ratio}
  \acro{SI}{self-interference}
  \acro{SIP}{Session Initiation Protocol}
  \acro{SISO}{\LU{S}{s}ingle \LU{I}{i}nput \LU{S}{s}ingle \LU{O}{o}utput}
  \acro{SIMO}{Single Input Multiple Output}
  \acro{SIR}{Signal to Interference Ratio}
  \acro{SLNR}{Signal-to-Leakage-plus-Noise Ratio}
  \acro{SMA}{Simple Moving Average}
  \acro{SNR}{\LU{S}{s}ignal-to-\LU{N}{n}oise \LU{R}{r}atio}
  \acro{SOCP}{second-order cone programming}
  \acro{SORA}{Satisfaction Oriented Resource Allocation}
  \acro{SORA-NRT}{Satisfaction-Oriented Resource Allocation for Non-Real Time Services}
  \acro{SORA-RT}{Satisfaction-Oriented Resource Allocation for Real Time Services}
  \acro{SPF}{Single-Carrier Proportional Fair}
  \acro{SRA}{Sequential Removal Algorithm}
  \acro{SRS}{sounding reference signal}
  \acro{SU-MIMO}{Single-User Multiple Input Multiple Output}
  \acro{SU}{Single-User}
  \acro{SVD}{singular value decomposition}
  \acro{SVM}{\LU{S}{s}upport \LU{V}{v}ector \LU{M}{m}achine}
  \acro{TCP}{Transmission Control Protocol}
  \acro{TDD}{time division duplex}
  \acro{TDMA}{\LU{T}{t}ime \LU{D}{d}ivision \LU{M}{m}ultiple \LU{A}{a}ccess}
  \acro{TNFD}{three node full duplex}
  \acro{TETRA}{Terrestrial Trunked Radio}
  \acro{TP}{Transmit Power}
  \acro{TPC}{Transmit Power Control}
  \acro{TTI}{transmission time interval}
  \acro{TTR}{Time-To-Rendezvous}
  \acro{TSM}{Throughput-based Satisfaction Maximization}
  \acro{TU}{Typical Urban}
  \acro{UBP}{uniform bit-partitioning}
  \acro{UE}{\LU{U}{u}ser \LU{E}{e}quipment}
  \acro{UEPS}{Urgency and Efficiency-based Packet Scheduling}
  \acro{UL}{uplink}
  \acro{UMTS}{Universal Mobile Telecommunications System}
  \acro{URI}{Uniform Resource Identifier}
  \acro{URM}{Unconstrained Rate Maximization}
  \acro{VR}{Virtual Resource}
  \acro{VoIP}{Voice over IP}
  \acro{WAN}{Wireless Access Network}
  \acro{WCDMA}{Wideband Code Division Multiple Access}
  \acro{WF}{Water-filling}
  \acro{WiMAX}{Worldwide Interoperability for Microwave Access}
  \acro{WINNER}{Wireless World Initiative New Radio}
  \acro{WLAN}{Wireless Local Area Network}
  \acro{WMMSE}{weighted minimum mean square error}
  \acro{WMPF}{Weighted Multicarrier Proportional Fair}
  \acro{WPF}{Weighted Proportional Fair}
  \acro{WSN}{Wireless Sensor Network}
  \acro{WWW}{World Wide Web}
  \acro{XIXO}{(Single or Multiple) Input (Single or Multiple) Output}
  \acro{ZF}{Zero-Forcing}
  \acro{ZMCSCG}{Zero Mean Circularly Symmetric Complex Gaussian}
\end{acronym}

\title{HiCoMAC: \\ Histogram-State Coded Multiple Access Computing \\ via
Computation-Oriented Modulation Design
}

\author{Xiaojing Yan, Carlo Fischione\\
	\normalsize School of Electrical Engineering and Computer Science, KTH Royal Institute of Technology, Stockholm, Sweden\\
	\normalsize Email: \{xiay, carlofi\}@kth.se
}

\newtheorem{theorem}{Theorem}
\newtheorem{prop}{Proposition}
\newtheorem{defi}{Definition}
\newtheorem{lem}{Lemma}
\newtheorem{rem}{Remark}

\maketitle
\thispagestyle{empty}
\pagestyle{empty}

\begin{abstract}
Nowadays, \ac{AirComp} exploits the waveform superposition property of the
wireless multiple access channel to directly compute a function of
distributed data from simultaneously transmitted signals. In this paper, we propose
\ac{HiCoMAC}, a fundamentally new convolutional-coded method for digital \ac{AirComp}
that directly recovers the arithmetic-sum sequence from the
superposition of coded and modulated user signals. \ac{HiCoMAC}
exploits the permutation invariance of the sum function to represent
the multi-user convolutional encoder state by the numbers of users
occupying the individual encoder states. This central idea enables log-max \ac{MAP} histogram-state Viterbi decoding with substantially reduced state
complexity compared to joint full-state decoding. We further introduce the computational free distance to
distinguish histogram paths that produce different arithmetic-sum
sequences and develop a weighted product graph for its shortest path
evaluation. This distance is then used to optimize the rectangular
ratio of a fixed labeled \ac{QAM} constellation to improve the computational accuracy. Simulation results show that the computation-oriented modulation design increases
the computational free distance and reduces the \ac{NMSE}, while
\ac{HiCoMAC} outperforms the considered multi-symbol digital
\ac{AirComp} baselines under equal transmission and energy budgets.
\end{abstract}

\section{Introduction}
\label{sec:introduction}

\ac{AirComp} exploits the waveform
superposition property of a wireless multiple-access channel to
compute a function of distributed data directly from simultaneously
transmitted signals~\cite{csahin2023survey}. Unlike conventional multiple access
communication, where the receiver first decodes the individual
messages and subsequently performs the desired computation~\cite{yin2006ofdma}, AirComp
integrates communication and computation at the physical layer~\cite{goldenbaum2013harnessing}. This property makes AirComp attractive for distributed sensing, learning, and control applications that require efficient aggregation of data generated by a large number of wireless devices
\cite{zhao2022broadband,amiri2020federated,zhu2018mimo,daei2025timely}.

Most early AirComp methods rely on analog transmission, where the
desired aggregate is directly represented by the superposition of
uncoded waveforms~\cite{goldenbaum2014nomographic}. Although this approach provides low communication
latency, its computation accuracy is sensitive to channel noise and fading. Moreover, uncoded analog
transmission is not directly compatible with the digital modulation and channel coding mechanisms used in practical communication
systems~\cite{csahin2022over}. These limitations have motivated increasing interest in
digital AirComp, where the user data are quantized, digitally
modulated, and recovered from a finite set of aggregated signals~\cite{razavikia2023computing}.

\subsection{Related Work and Motivation}

Recently, digital \ac{AirComp} has been investigated from the perspective of modulation design. Particularly, ChannelComp designs digital constellations such that aggregated signals associated with different function outputs remain distinguishable at the receiver \cite{saeed2023ChannelComp}. After such seminal work, to further improve computation reliability, several
multi-symbol transmission methods have subsequently been developed. In particular, \ac{ReMAC} jointly designs the constellation diagram and coded repetition patterns over multiple time slots, thereby introducing transmission redundancy while avoiding destructive overlaps between aggregate signal sequences \cite{yan2025remac}. Bit-slicing divides a quantized value into multiple bit segments and transmits them using separate modulation symbols \cite{liu2025digital}. We extended the approach we proposed in~\cite{yan2025remac} by introducing \ac{SeMAC} that designs multi-symbol transmission sequences to enlarge the separation between different function outputs, with power allocation providing further performance improvement~\cite{yan2025multi}. Bit-partitioning methods divide the quantized bits into groups and jointly optimize their allocation and modulation mapping across multiple symbols~\cite{yan2025joint}. Although these methods exploit multiple channel uses, they do not incorporate the memory of a finite-state channel code or the sequential structure induced by channel encoding.

Channel coding has increasingly been incorporated into traditional multiple access
transmission to introduce redundancy and improve reliability
\cite{ozccelikkale2014short}. The broadband digital AirComp system
in~\cite{you2023broadband} develops joint channel decoding and
aggregation for convolutional and \ac{LDPC} codes, enabling accurate model aggregation from superimposed coded transmissions. For convolutional coding, its full-state joint decoder searches a trellis whose states jointly specify the encoder states of the individual users and recovers the sum after identifying the corresponding joint source sequence. Its reduced-state decoder retains only a metric-selected subset of these joint states at each decoding stage. Although this reduces the decoding complexity, the underlying state representation still distinguishes the individual users, and the reduction is obtained through path pruning, which may degrade decoding accuracy. The study in~\cite{santos2025over} partially overlaps independently encoded user transmissions and performs joint maximum likelihood (ML) decoding with a single Viterbi execution, but recovers the individual user information blocks rather than the computation of a desired function.

Other coded AirComp constructions have also been investigated. The
method in~\cite{xie2026joint} combines non-binary \ac{LDPC} coding
with lattice-based modulation and treats the aligned multi-user
transmission as a virtual point-to-point link for decoding. The method in~\cite{weng2026channel} applies an identical
linear encoding matrix at all users to preserve the aggregation
structure and obtain coding gains against channel distortion. However, existing coded \ac{AirComp} methods either retain user-specific decoding states or focus primarily on reliable coded aggregation, without exploiting the permutation invariance of the desired computation to construct the decoding state space. Moreover, the signal space separation between coded superpositions associated with different computation outputs has not been explicitly incorporated into modulation design. Addressing these issues requires a computation-oriented exploration in digital \ac{AirComp} that connects the coding structure, distance analysis, and constellation design. 

\subsection{Contributions}

In this paper, we propose \ac{HiCoMAC}, a convolutional-coded digital
\ac{AirComp} method for reliable arithmetic-sum computation. In
HiCoMAC, all users employ the same convolutional encoder and labeled
\ac{QAM} constellation, while the receiver directly decodes the
arithmetic-sum sequence from the superimposed signals. \ac{HiCoMAC} exploits the permutation
invariance of the sum function to aggregate joint full-state
encoder states into histogram states. This representation supports Viterbi decoding of the arithmetic-sum sequence while avoiding
the exponential state growth of the joint full-state trellis. We
further develop a computation-oriented distance characterization and
use it to design the modulation constellation. Finally, we evaluate the reliability performance of  \ac{HiCoMAC} via comprehensive simulation results.

The main contributions are summarized as follows:
\begin{itemize}

    \item We propose \ac{HiCoMAC}, a convolutional-coded digital \ac{AirComp} method under an ideally aligned \ac{AWGN} \ac{MAC}. We develop a histogram-state representation that compactly describes the joint multi-user convolutional encoding process without distinguishing the identities of individual users.

    \item Based on the histogram-state representation, we develop a log-max \ac{MAP} Viterbi decoder that directly recovers the arithmetic-sum sequence while reducing the decoding complexity. We show that, for a fixed encoder memory, the number of decoder states grows polynomially with the number of users, instead of exponentially as in the full-state joint Viterbi decoder in~\cite{you2023broadband}.

    \item We introduce the computational free distance to quantify the minimum signal space separation between irreducible computational error events associated with different arithmetic-sum sequences. We construct a modified product graph that represents pairs of histogram paths and prove that the computational free distance can be obtained by the shortest path search in this graph.

    \item We use the computational free distance as the modulation design criterion and optimize the rectangular ratio of a fixed labeled \ac{QAM} constellation. The resulting optimization is solved by a coarse-to-fine search over the ratio, with the computational free distance evaluated efficiently by the Dijkstra algorithm.

    \item Simulation results demonstrate that the proposed \ac{HiCoMAC} design
achieves a larger computational free distance and lower \ac{NMSE} than the corresponding square \ac{QAM} design, and also achieves lower \ac{NMSE} than the considered multi-symbol AirComp strategies based on modulation and repetition design.
\end{itemize}



\section{HiCoMAC System Model}
\label{sec:system_model}

\input{Figures/system_model}

This section presents the system model of \ac{HiCoMAC}. We first describe its convolutional coding and modulation model~\cite{proakis2001digital},
and then construct the histogram-state representation of the joint multi-user convolutional encoding process.

\subsection{Convolutional Coding and Modulation}

We consider a network with \(K\) users and one computation point (CP), where each
user \(k\) has an input value
\(x_k\in[x_{\min},x_{\max}]\). Before transmission, \(x_k\) is
uniformly quantized into a \(B\)-bit sequence $\bm b_k
=[b_k[1],\ldots,b_k[B]]\in\{0,1\}^{B}$, where the quantization spacing is denoted as 
\begin{equation}
    \Delta_B
    =
    \frac{x_{\max}-x_{\min}}{2^B-1}.
    \label{eq:quantization_step}
\end{equation}

The \ac{CP} aims to compute the arithmetic sum
\(x_{\Sigma}=\sum_{k=1}^{K}x_k\). The corresponding sum of the
quantized inputs is
\begin{align}
    x_{\Sigma}^{\mathrm Q}
    &=
    Kx_{\min}
    +
    \Delta_B
    \sum\nolimits_{\tau=1}^{B}
    2^{B-\tau}
    \sum\nolimits_{k=1}^{K}b_k[\tau],
    \label{eq:quantized_sum_expansion}
\end{align}
which is determined by the bit wise arithmetic sums. Thus, we define the desired arithmetic-sum sequence as
\(\bm S=[S[1],\ldots,S[B]]\in\{0,\ldots,K\}^{B}\), where $S[\tau]=\sum\nolimits_{k=1}^{K} b_k[\tau]$.

The bit information block of each user is then encoded using the same \((1,L,m)\) convolutional code, where $L$
and $m$ are the number of output bits and memory order. Hence, the convolutional encoder has \(N=2^m\) states indexed by
\(\mathcal S_{\mathrm c}=\{1,\ldots,N\}\) and the coding rate can be expressed as $R=1/L$.  Here, we consider that all encoders are terminated in the all-zero state.
For each input value, the encoder input length is
\(T=B+m\) with $m$ zero tail bits $b_k[\tau]=0$ for $\tau\in \{B+1,\ldots,T\}$.
When encoding, each bit \(b_k[\tau]\) is processed in one encoder transition, where user $k$ feeds $b_k[\tau]$ into the convolutional encoder and
outputs an $L$-bit coded word
$\bm c_k[\tau]\in\{0,1\}^L$. Let
\(s_k[\tau]\in\mathcal S_{\mathrm c}\) denote the encoder state of
user \(k\) after processing the first \(\tau\) bit inputs, with
\(s_k[0]=1\) representing the all-zero initial state. Then, the next encoder state and coded output can be described as
\begin{equation}
\begin{cases}
    s_k[\tau]
    =
    \delta(s_k[\tau-1],b_k[\tau]),\\
    \bm c_k[\tau]
    =
    \gamma(s_k[\tau-1],b_k[\tau]),
\end{cases}
\end{equation}
where
$\delta:\mathcal S_{\mathrm c}\times\{0,1\}\rightarrow
\mathcal S_{\mathrm c}$ and
$\gamma:\mathcal S_{\mathrm c}\times\{0,1\}\rightarrow\{0,1\}^{L}$.

The channel coded word is then mapped to a complex coded modulation symbol.
Let
$\tilde{\bm{x}}=[\tilde{x}_1,\ldots,\tilde{x}_M]^{\mathsf T}
\in\mathbb C^M$ denote a modulation vector containing a fixed labeled
\ac{QAM} constellation pattern, and $M=2^L$ is the modulation order. The labeling rule is specified by
$\lambda:\{0,1\}^{L}\rightarrow\{1,\ldots,M\}$\footnote{We choose \ac{QAM} for its widespread practical use and structured
representation, which provides a natural and
tractable basis for this initial study of computation-oriented coded
modulation design in digital \ac{AirComp}.}. For each constellation
index $\ell$, we define
$\tilde{x}_\ell=\tilde{x}_{\ell,\mathrm I}+j\tilde{x}_{\ell,\mathrm Q}$,
where $\tilde{x}_{\ell,\mathrm I}=\Re\{\tilde{x}_\ell\}$ and
$\tilde{x}_{\ell,\mathrm Q}=\Im\{\tilde{x}_\ell\}$.
The average energies of its in-phase and quadrature components are represented as
\begin{equation} \nonumber
    E_{\mathrm I}
    =
    \frac{1}{T}
    \sum_{\tau=1}^{T}
    \mathbb E \Big[\widetilde x_{\lambda(\bm c_k[\tau]),I}^{2}\Big],
    ~
    E_{\mathrm Q}
    =
    \frac{1}{T}
    \sum_{\tau=1}^{T}
    \mathbb E \Big[\widetilde x_{\lambda(\bm c_k[\tau]),Q}^{2}\Big],
    \label{eq:axis_energies}
\end{equation}
where the expectation is taken over the equiprobable $B$ information bits, and the $T$ transitions include zero-tail terminations.
In this paper, we propose to adjust the
relative in-phase and quadrature scaling of the modulation via a rectangular ratio
\(\rho>0\). Thus, the resulting normalized constellation point is presented as
\begin{equation}
    \tilde x_\ell(\rho)
    =
    \xi(\rho)
    \left(
    \rho\widetilde x_{\ell,\mathrm I}
    +
    \mathrm j\widetilde x_{\ell,\mathrm Q}
    \right),
    ~\forall \ell \in \{1,\ldots,M\},
    \label{eq:normalized_rectangular_qam}
\end{equation}
where
\begin{equation}
    \xi(\rho)
    =
    \sqrt{
    \frac{E_{\mathrm s}}
    {\rho^2E_{\mathrm I}+E_{\mathrm Q}}
    }
    \label{eq:constellation_normalization_factor}
\end{equation}
ensures an average transmitted coded symbol energy \(E_{\mathrm s}\) over the complete terminated block. In particular,
the case \(\rho=1\) preserves the relative scaling of the original
QAM pattern, whereas \(\rho>1\) increases the in-phase spacing
relative to the quadrature spacing and \(\rho<1\) decreases it.
Accordingly, the coded modulation symbol transmitted by user \(k\) for information bit \(b_k[\tau]\) is given by
\begin{equation}
    \vec x_k[\tau]
    =
    \mu\bigl(\boldsymbol c_k[\tau];\rho\bigr)
    = \tilde x_{\lambda(\boldsymbol c_k[\tau])}(\rho),
    \label{eq:transmitted_coded_symbol}
\end{equation}
where $\mu:\{0,1\}^{L}\rightarrow \mathbb C$ is the modulation encoder.

Given the novelty of the system model, in this paper we consider ideal channel conditions where the user signals are perfectly aligned at the \ac{CP} and the received superposition is affected only by \ac{AWGN}\footnote{Note that we assume accurate \ac{CSI} and perfect time and phase synchronization among the users. Under fading channels, ideal channel inversion can be realized as $p_k[\tau]=h_k^*[\tau]/|h_k[\tau]|^2$ where $h_k[\tau]\in\mathbb C$ denote the corresponding channel coefficient from user $k$ to the \ac{CP}. In deep fading, practical implementations such as truncated channel inversion and pre-equalization can be employed. However, the associated channel estimation, synchronization and power adaptive control are outside the scope of this work and will deserve a future dedicated journal paper.}. Accordingly, the received signal for input $b[\tau]$ is given by
\begin{equation}
    \vec y[\tau]
    =
    \sum\nolimits_{k=1}^{K}\vec{x}_k[\tau]
    +
    \vec z[\tau],
\end{equation}
where $\vec z[\tau]\sim\mathcal{CN}(0,N_0)$ denotes the Gaussian noise. Collecting all
received symbols gives the received sequence
$\bm y=[\vec y[1],\ldots,\vec y[T]]$.
Rather than recovering the individual quantized bit sequences
\(\boldsymbol b_1,\ldots,\boldsymbol b_K\), the \ac{CP}
directly estimates the arithmetic-sum sequence $\widehat{\bm S}=[\widehat S[1],\ldots,\widehat S[B]]$ from \(\bm y\) by a Viterbi decoder, and reconstructs the desired sum as
\begin{equation}
    \widehat x_{\Sigma}
    =
    Kx_{\min}
    +
    \Delta_B
    \sum\nolimits_{\tau=1}^{B}
    2^{B-\tau}\widehat S[\tau].
    \label{eq:reconstructed_computation_sum}
\end{equation}
Specifically, quantization introduces the irreversible distortion
\(\epsilon_k=x_k-x_k^{\mathrm Q}\), where $x_k^{\mathrm Q}$ denotes the quantized input. When
\(\widehat{\boldsymbol S}=\boldsymbol S\), the residual error caused by quantization is given as
\begin{equation}
x_{\Sigma}-\widehat x_{\Sigma}
=\sum\nolimits_{k=1}^{K}\epsilon_k,
\label{eq:residual_computational_error}
\end{equation}
 which determines the
computation error floor.
The overall \ac{HiCoMAC} system model is
illustrated in Fig.~\ref{fig:system_model}.

\input{Figures/state_transition_diagram}

\noindent\textit{Illustrative example.}
To make the coding and modulation procedure concrete, we consider a
two-user setting with memory order \(m=1\) and \(L=2\) coded bits per
input bit. The convolutional encoder therefore has \(N=2\) states, and we index these two states by
\(\sigma\in\{1,2\}\). For this memory-one encoder, we define \(o(\sigma)\) as the memory
content associated with state index \(\sigma\), with \(o(1)=0\) and
\(o(2)=1\). We use the rate-\(1/2\) convolutional encoder with generator pair
\(g(D)=(3,1)_8\). For an input bit \(b\in\{0,1\}\), the next state index is
\begin{equation}
\delta(\sigma,b)=
    \begin{cases}
    1, & b=0,\\
    2, & b=1.
    \end{cases}    
\end{equation}
The coded output is $\gamma(\sigma,b)=(c_1,c_2)$,
where $c_1=b\oplus o(\sigma),~c_2=b$. 
For illustration, we set \(\rho=1\) and unit average energy \(E_{\mathrm s}=1\), yielding
the normalized 4\ac{QAM} mapping
\begin{equation}
    \mu(\bm c;1)
    =
    \begin{cases}
        (1+\mathrm j)/\sqrt{2}, & \bm c=00,\\
        (-1+\mathrm j)/\sqrt{2}, & \bm c=01,\\
        (-1-\mathrm j)/\sqrt{2}, & \bm c=11,\\
        (1-\mathrm j)/\sqrt{2}, & \bm c=10.
    \end{cases}
\end{equation}


\subsection{Histogram-state Representation}

The convolutional encoder can be represented as a finite-state sequential
machine. In this subsection, we first describe the corresponding
single-user state transition graph. 
Then, for the multi-user \ac{AirComp}
setting, to reduce the complexity, we construct a
histogram-state transition multigraph that exploits the permutation invariance of the arithmetic-sum function.

\subsubsection{Single-User State Transition Graph}

We first describe the state-transition graph induced by the single-user
convolutional encoder. Each vertex represents one encoder state index
\(\sigma\in\mathcal S_{\mathrm c}\). Since the encoder input is binary, each state index has two
outgoing transitions, corresponding to \(b=0\) and \(b=1\).

For an input bit \(b\in\{0,1\}\), the encoder moves from state index
\(\sigma\) to the next state index \(\delta(\sigma,b)\) and produces the
coded output \(\gamma(\sigma,b)\in\{0,1\}^L\). The corresponding
transmitted coded modulation symbol is
\(\mu(\gamma(\sigma,b);\rho)\). 
For the illustrative two-user example with \(g(D)=(3,1)_8\) and
\(\rho=1\), the resulting state transition graph is shown in
Fig.~\ref{fig:single_user_state_transition}.

\subsubsection{Histogram-State Transition Multigraph}
\label{subsubsec:histogram_state_multigraph}

In the \(K\)-user \ac{AirComp} setting, since each user has \(N\) possible encoder states, the ordered joint
state space contains \(N^K=2^{mK}\) states. Moreover, every ordered
joint state has \(2^K\) outgoing transitions.
Consequently, both the number of states and outgoing
transitions grow exponentially with the number of users.
However, the \ac{CP} only needs to recover the arithmetic-sum sequence \(\bm S\),
which is invariant to permutations of user identities. Hence, ordered
joint states that differ only by a permutation of their user indices
are equivalent for sum function computation. This permutation
invariance motivates the histogram-state representation~\cite{scott2010histogram}.

\begin{defi}[Histogram state]
\label{def:histogram_state}
After processing the first \(\tau\) encoder inputs, the histogram state is
$\bm n[\tau]=\big[n_1[\tau],\ldots,n_N[\tau]\big]$,
where
\begin{equation}
    n_\sigma[\tau]
    =
    \left|
    \left\{
    k:s_k[\tau]=\sigma
    \right\}
    \right|,~
    \sigma\in\mathcal S_{\mathrm c},
    \label{eq:histogram_state_component}
\end{equation}
is the number of users whose encoder is in state \(\sigma\).
\end{defi}

Since every user occupies exactly one encoder state, the histogram-state space is therefore presented as
\begin{equation}
\label{eq:histogram_state_space}
    \mathcal N_K
    =
    \left\{
    \bm n\in\mathbb Z_{\geq 0}^{N}:
    \sum\nolimits_{\sigma\in\mathcal S_{\mathrm c}}n_\sigma=K
    \right\},
\end{equation}
whose cardinality is $Q_{\mathrm H}
    =|\mathcal N_K|=\binom{K+N-1}{N-1}$.
We enumerate
\(\mathcal N_K=\{\bm n^{(1)},\ldots,\bm n^{(Q_{\mathrm H})}\}\),
where \(\bm n^{(i)}\) denotes the \(\rm i\)-th possible histogram state.
Different multi-user inputs may induce the same pair of source and
target histogram states, resulting in parallel transitions. We therefore
represent the state transition structure by the directed
multigraph $\mathcal G_{\mathrm H}
    =
    \big(
    \mathcal V_{\mathrm H},
    \mathcal E_{\mathrm H}
    \big)$,
where
\(
\mathcal V_{\mathrm H}
=
\{v_1,\ldots,v_{Q_{\mathrm H}}\}
\)
and vertex \(v_i\) corresponds to \(\bm n^{(i)}\).

For a transition from \(v_i\) to \(v_j\), let
\(\bm a_{ij}^{r}
=[a_{ij,1}^{r},\ldots,a_{ij,N}^{r}]\)
denote an input-count vector, where \(a_{ij,\sigma}^{r}\)
is the number of users in encoder state \(\sigma\) whose input
bit is \(1\). The feasible input-count vectors associated with
the vertex pair \((v_i,v_j)\) are collected in
\begin{equation}
\label{eq:input_count_transition_set}
\mathcal A_{ij}
=
\left\{
\bm a_{ij}^{r}\in\mathbb Z_{\geq0}^{N}
\;\middle|\;
\begin{aligned}
&0\leq a_{ij,\sigma}^{r}\leq n_{\sigma}^{(i)},
&&\forall\sigma\in\mathcal S_{\mathrm c},\\
&n_{\sigma'}^{(j)}=H_{ij,\sigma'}^{r},
&&\forall\sigma'\in\mathcal S_{\mathrm c}
\end{aligned}
\right\},
\end{equation}
where
\begin{equation}
\label{eq:histogram_transition_component}
\begin{aligned}
H_{ij,\sigma'}^{r}
=
\sum\nolimits_{\sigma\in\mathcal S_{\mathrm c}}
\Big[
&a_{ij,\sigma}^{r}
\mathds{1}\{\delta(\sigma,1)=\sigma'\}\\
&+
\big(n_{\sigma}^{(i)}-a_{ij,\sigma}^{r}\big)
\mathds{1}\{\delta(\sigma,0)=\sigma'\}
\Big].
\end{aligned}
\end{equation}
and \(r=1,\ldots,|\mathcal A_{ij}|\) indexes the distinct input-count vectors.

\begin{defi}[Histogram edge]
\label{def:histogram_edge}
For each input-count vector
\(\bm a_{ij}^{r}\in\mathcal A_{ij}\), the directed edge \(e_{ij}^{r}\) is called a
\emph{histogram edge}, representing the multi-user state transition from vertex \(v_i\) to vertex \(v_j\) specified by \(\bm a_{ij}^{r}\). Consequently, all histogram edges are collected in $\mathcal E_{\mathrm H}=
    \left\{e_{ij}^{r}
    \,\middle|\,
    v_i,v_j\in\mathcal V_{\mathrm H},\;
    r=1,\ldots,|\mathcal A_{ij}|
    \right\}$.
\end{defi}


Each histogram edge retains the computation and signal information associated with the corresponding multi-user transition. Since \(a_{ij,\sigma}^{r}\) users in encoder state \(\sigma\) have input bit \(1\), the arithmetic-sum label carried by \(e_{ij}^{r}\) is \begin{equation} 
\label{eq:histogram_sum_label} S_{ij}^{r} = \sum\nolimits_{\sigma\in\mathcal S_{\mathrm c}} a_{ij,\sigma}^{r}. \end{equation}

Because the histogram representation removes user identities, the same histogram edge may correspond to multiple ordered multi-user input combinations. The number of such realizations is represented by the edge multiplicity 
\begin{equation} 
\label{eq:histogram_edge_multiplicity} \varphi_{ij}^{r} = \prod\nolimits_{\sigma\in\mathcal S_{\mathrm c}} \binom{n_\sigma^{(i)}}{a_{ij,\sigma}^{r}}. 
\end{equation} 

For a given rectangular ratio \(\rho\), the noiseless aggregated symbol associated with \(e_{ij}^{r}\) is presented as
\begin{equation} 
\label{eq:histogram_aggregate_symbol} 
\begin{aligned} 
u_{ij}^{r}(\rho) = \sum\nolimits_{\sigma\in\mathcal S_{\mathrm c}} \Big[ &\big(n_\sigma^{(i)}-a_{ij,\sigma}^{r}\big) \mu(\gamma(\sigma,0);\rho)\\ &+ a_{ij,\sigma}^{r} \mu(\gamma(\sigma,1);\rho) \Big]. 
\end{aligned} 
\end{equation} 

Accordingly, for the illustrative two-user example, the histogram-state transition graph contains three vertices
\(v_1,v_2,v_3\), associated with the histogram states
\begin{equation}
\label{eq:two_user_histogram_vertices}
    \bm n^{(i)}
    =
    \begin{cases}
        (2,0), & i=1,\\
        (1,1), & i=2,\\
        (0,2), & i=3.
    \end{cases}
\end{equation}

The resulting histogram-state transition
multigraph is illustrated in
Fig.~\ref{fig:histogram_based_state_transition}.

\input{Figures/trellis_diagram}

\section{Histogram-State Decoding}
\label{sec:histogram_state_decoding}

Based on the histogram-state representation developed in last section, we formulate the \ac{MAP} arithmetic-sum sequence criterion and derive the corresponding log-max \ac{MAP} histogram-state Viterbi decoder.

\subsection{Arithmetic-Sum Sequence Criterion}
\label{subsec:map_sum_sequence}

We first formulate the statistical decision criterion for recovering the
arithmetic-sum sequence from the received aggregated sequence $\mathbf y$.
The decoding objective is to recover a valid
arithmetic-sum sequence $\mathbf S$, we therefore
adopt a sequence-oriented \ac{MAP} criterion. Although this criterion does not
directly minimize the \ac{NMSE} of the reconstructed computation value, it provides
a tractable sequence-decoding objective, while the \ac{NMSE} evaluated in
Section~\ref{sec:simulation} is used to assess the resulting computation accuracy.

To account for the encoder memory, the histogram-state transition
multigraph \(\mathcal G_{\mathrm H}\) is unfolded over the \(T\)
encoder inputs defined in Section~II-A, forming a histogram-state
trellis of $T+1$ stages, where each stage contains $Q_{\mathrm H}$ histogram stages. For
\(\tau\in\{1,\ldots,T\}\), the encoder input \(b_k[\tau]\) is processed during the transition from stage \(\tau-1\) to stage \(\tau\).

Let \(\mathcal P_T\) denote the set of all terminated histogram paths
in this trellis. For a path \(\Pi\in\mathcal P_T\), let
\(q_\tau\in\{1,\ldots,Q_{\mathrm H}\}\) denote the histogram-state
index visited at stage \(\tau\). The
corresponding path transition is represented by
\(e_{q_{\tau-1}q_\tau}^{r_\tau}\), where \(r_\tau\) denotes the
selected parallel edge index. Hence, the terminated path can be written as $\Pi
    =(
    e_{q_0q_1}^{r_1},
    \ldots,
    e_{q_{T-1}q_T}^{r_T})$,
where \(q_0=q_T=1\) corresponds to the all-zero initial and terminal histogram state \(\bm n^{(1)}=(K,0,\ldots,0)\).
The first \(B\) edges form the information segment
    $\Pi_B
    =(
    e_{q_0q_1}^{r_1},
    \ldots,
    e_{q_{B-1}q_B}^{r_B})$,
while the last \(m\) edges correspond to encoder termination. 
The arithmetic-sum sequence generated by the information segment
\(\Pi_B\) is $\mathcal M(\Pi_B)
    =(
    S_{q_0q_1}^{r_1},
    \ldots,
    S_{q_{B-1}q_B}^{r_B})$.
Thus, a terminated histogram path \(\Pi\) produces a
candidate arithmetic-sum sequence \(\bm S\) if
\(\mathcal M(\Pi_B)=\bm S\), and we collect all paths in $\mathcal P(\bm S)=\left\{\Pi\in\mathcal P_T\,\middle|\,
    \mathcal M(\Pi_B)=\bm S\right\}$.

Since the histogram representation removes user identities, a
terminated histogram path can represent multiple ordered multi-user
input sequences. For a path $\Pi$, the
number of such realizations is
\begin{equation}
    \Omega(\Pi)
    =
    \prod\nolimits_{\tau=1}^{T}
    \varphi_{q_{\tau-1}q_\tau}^{r_\tau}.
\end{equation}



Under the assumption that the \(KB\) information bits are mutually
independent and equiprobable, 
the prior probability of each terminated histogram path can be evaluated as
\begin{equation}
    \Pr(\Pi)
    =
    2^{-KB}\Omega(\Pi).
    \label{eq:histogram_path_prior}
\end{equation}
Moreover, all ordered multi-user realizations represented by the same
histogram path generate the same noiseless aggregated symbol sequence. Thus, both the
path prior and the conditional likelihood can be evaluated from the histogram path without retaining the individual user identities.
Since the desired computation output is the arithmetic-sum sequence, the \ac{MAP} sequence decoder marginalizes the posterior probabilities of
all terminated histogram paths associated with the same \(\bm S\) with the following criterion
\begin{subequations}
  \begin{align}
    \widehat{\bm S}
    &=
    \arg\max_{\bm S}
    \sum\nolimits_{\Pi\in\mathcal P(\bm S)}
    \Pr(\Pi\mid\bm y)\\
    &=
    \arg\max_{\bm S}
    \sum\nolimits_{\Pi\in\mathcal P(\bm S)}
    \exp\bigl(-\Lambda(\Pi)\bigr)\\
    &=
    \arg\min_{\bm S}
    \left\{
    -\log
    \sum\nolimits_{\Pi\in\mathcal P(\bm S)}
    \exp\bigl(-\Lambda(\Pi)\bigr)
    \right\},
    \label{eq:map_sum_sequence_log_metric}
\end{align}  
\end{subequations}
where $\Lambda(\Pi)$ is defined as the additive metric of a terminated histogram path, which is given by
\begin{equation}\nonumber
    \Lambda(\Pi)
    =
    \sum\nolimits_{\tau=1}^{T}
    \left[
    \frac{
    \left|
    \vec y[\tau]
    -
    u_{q_{\tau-1}q_\tau}^{r_\tau}(\rho)
    \right|^2
    }{N_0}
    -
    \log
    \varphi_{q_{\tau-1}q_\tau}^{r_\tau}
    \right].
    \label{eq:histogram_path_metric}
\end{equation}

To obtain a tractable decoder,
we apply the log-max approximation, i.e.,
\(
-\log\sum_i\exp(-a_i)\approx\min_i a_i
\),
which gives
\begin{equation}
    \widehat{\bm S}\approx
    \arg\min_{\bm S}
    \min_{\Pi\in\mathcal P(\bm S)}
    \Lambda(\Pi)
    \label{eq:max_log_sum_sequence}
\end{equation}

Since the sets \(\{\mathcal P(\bm S)\}\) partition
\(\mathcal P_T\) according to $\bm S$, the
nested minimization in~\eqref{eq:max_log_sum_sequence} reduces to
\begin{equation}
    \widehat{\Pi}
    =
    \arg\min_{\Pi\in\mathcal P_T}
    \Lambda(\Pi),
    \label{eq:min_metric_histogram_path}
\end{equation}
with the decoded arithmetic-sum sequence given by 
$\widehat{\bm S}=\mathcal M(\widehat{\Pi}_B)$.
Under the log-max approximation, the original \ac{MAP} arithmetic-sum sequence
decoding is approximated by a minimum-metric path search on the
histogram-state trellis. 

\subsection{Log-Max MAP Histogram-State Decoding}
\label{subsec:histogram_viterbi}

The minimum-metric terminated histogram path problem in
\eqref{eq:min_metric_histogram_path} can be solved exactly by applying the
Viterbi algorithm to the histogram-state trellis~\cite{forney2005viterbi}. For
\(\tau\in \{1,\ldots,T\}\), let \(\mathcal E_{\mathrm H}^{(\tau)}\) denote
the set of histogram edges between stages \(\tau-1\) and
\(\tau\), defined as
\begin{equation}
    \mathcal E_{\mathrm H}^{(\tau)}
    =
    \begin{cases}
        \mathcal E_{\mathrm H},
        & 1\leq \tau\leq B,\\[1mm]
        \left\{
        e_{ij}^{r}\in\mathcal E_{\mathrm H}
        \,\middle|\,
        \bm a_{ij}^{r}=\bm 0
        \right\},
        & B<\tau\leq T.
    \end{cases}
    \label{eq:stage_histogram_edge_set}
\end{equation}

For \(e_{ij}^{r}\in\mathcal E_{\mathrm H}^{(\tau)}\), the branch
metric associated with the received symbol \(\vec y[\tau]\) is
\begin{equation}
    \operatorname{BM}_\tau(e_{ij}^{r})
    =
    \frac{
    \left|
    \vec y[\tau]-u_{ij}^{r}(\rho)
    \right|^2
    }{N_0}
    -
    \log \varphi_{ij}^{r}.
    \label{eq:viterbi_branch_metric}
\end{equation}
Here, \(i\) and \(j\) denote arbitrary source and target histogram-state
indices in the Viterbi recursion.

Let \(\operatorname{PM}_\tau(j)\) denote the minimum path metric reaching histogram
state \(\bm n^{(j)}\) at stage \(\tau\). Since the trellis starts from
the all-zero histogram state
\(\bm n^{(1)}=(K,0,\ldots,0)\), the path metrics are initialized as
\begin{equation}
    \operatorname{PM}_0(j)
    =
    \begin{cases}
        0, & j=1,\\
        +\infty, & \text{otherwise}.
    \end{cases}
    \label{eq:viterbi_initialization}
\end{equation}

For target histogram state \(\bm n^{(j)}\), define its corresponding
incoming edge set between stages \(\tau-1\) and \(\tau\) as $\mathcal I_\tau(j)
    =\{
    (i,r):
    e_{ij}^{r}\in\mathcal E_{\mathrm H}^{(\tau)}\}$.
The \ac{ACS} operation updates the path metric according to
\begin{equation}
    \operatorname{PM}_\tau(j)
    =
    \min_{(i,r)\in\mathcal I_\tau(j)}
    \left[
    \operatorname{PM}_{\tau-1}(i)
    +
    \operatorname{BM}_\tau(e_{ij}^{r})
    \right],
    \label{eq:viterbi_path_metric}
\end{equation}
and stores the corresponding survivor information as
\begin{equation}
    \operatorname{Sur}_{\tau}(j)
    =
    \arg\min_{(i,r)\in\mathcal I_\tau(j)}
    \left[
    \operatorname{PM}_{\tau-1}(i)
    +
    \operatorname{BM}_\tau(e_{ij}^{r})
    \right].
    \label{eq:viterbi_survivor}
\end{equation}
Thus, if
\(\operatorname{Sur}_{\tau}(j)=(i^\star,r^\star)\), the survivor
reaching \(\bm n^{(j)}\) at stage \(\tau\) comes from
\(\bm n^{(i^\star)}\) through histogram edge
\(e_{i^\star j}^{r^\star}\).
After stage \(T\), there is only one surviving path with the lowest path metric. Finally, the decoder traces back the minimum-metric
terminated histogram path \(\widehat{\Pi}\) in
\eqref{eq:min_metric_histogram_path}, and reconstructs the final computation value  according to
\eqref{eq:reconstructed_computation_sum}. Hence, the Viterbi recursion
exactly solves the minimum-metric path problem obtained from the
log-max approximation of the \ac{MAP} arithmetic-sum sequence criterion.

To illustrate the decoding procedure, we consider \(B=3\) input information bits for the two-user example in Fig.~\ref{fig:histogram_based_state_transition}. 
Fig.~\ref{fig:trellis diagram} illustrates a decoding instance for the noiseless
received sequence \(\bm y=[0,-2,0,2]\), where the Viterbi algorithm selects the terminated histogram path highlighted in the red line, and finally returns the recovered sum value
\(\widehat{x}_{\Sigma}=9\).

\begin{rem}
\label{rem:posterior means}
Note that the histogram-state trellis is not restricted to Viterbi decoding.
It can also support soft-output forward--backward inference, such as
the \ac{BCJR} algorithm~\cite{bahl1974optimal}, to obtain the posterior
marginals of the arithmetic-sum symbols $S[\tau]$. Since the quantized
computation value is linear in these
symbols, their posterior means can be used to construct the Bayesian
\ac{MMSE} estimate of the quantized sum~\cite{steven1993fundamentals}.
However, the resulting posterior-mean symbols generally do not form
a valid arithmetic-sum sequence associated with a terminated histogram
path. Moreover, in this work, we focus on sequence-oriented Viterbi decoding
because it provides a direct connection between
the computational error events and computational free-distance
criterion developed in Section~\ref{sec:computational_error_events}. Soft-output Bayesian decoding and
modulation design directly targeting computation \ac{MSE} are left for
future work.
\end{rem}

\subsection{Decoding Complexity}

We next analyze the computational complexity of the log-max \ac{MAP} histogram-state Viterbi decoder, and we first characterize the cardinality of the histogram edge set. 

\begin{prop}
\label{prop:histogram_edge_cardinality} 
For the \(K\)-user histogram-state transition multigraph generated by a convolutional encoder with \(N\) states, the total number of histogram edges is 
\begin{equation} 
C_{\mathrm H}=|\mathcal E_{\mathrm H}| = \binom{K+2N-1}{2N-1}. \label{eq:number_histogram_edges} 
\end{equation} 
\end{prop}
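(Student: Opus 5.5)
The count rests on a bijection between histogram edges and pairs $(\bm n^{(i)},\bm a)$, where $\bm n^{(i)}\in\mathcal N_K$ is a source histogram state and $\bm a\in\mathbb Z_{\ge0}^N$ satisfies $0\le a_\sigma\le n^{(i)}_\sigma$ for every $\sigma$. Once that is in place, we re-encode each such pair as a composition of $K$ into $2N$ nonnegative parts and count those by stars and bars.

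\emph{Step 1 (edges are indexed by source state and input-count vector).} Fix a source vertex $v_i$ and an input-count vector $\bm a$ with $0\le a_\sigma\le n^{(i)}_\sigma$. The target histogram is forced: by \eqref{eq:histogram_transition_component}, the components $H_{ij,\sigma'}$ are determined by $\bm n^{(i)}$ and $\bm a$. These components are nonnegative, and they sum to
\[
\sum_\sigma \bigl[a_\sigma+(n^{(i)}_\sigma-a_\sigma)\bigr]=K,
\]
since each $\sigma$ has exactly one successor under each input bit. Hence the target lies in $\mathcal N_K$ and equals exactly one vertex $v_j$, so $\bm a\in\mathcal A_{ij}$ for that unique $j$ and for no other. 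Conversely, by Definition~\ref{def:histogram_edge} every edge $e_{ij}^r$ is one such pair, and distinct $r$ index distinct vectors. It follows that
\[
|\mathcal E_{\mathrm H}|=\sum_i\sum_j|\mathcal A_{ij}|
=\#\bigl\{(\bm n,\bm a):\bm n\in\mathcal N_K,\;\bm 0\le\bm a\le\bm n\bigr\}.
\]
This is the step to state carefully. The sets $\mathcal A_{ij}$ over different $j$ must partition the feasible $\bm a$ for a given $i$, which needs both existence and uniqueness of the target. It is the only point where the structure of $\delta$ is used, namely that each state has exactly one successor per input bit.

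\emph{Step 2 (reduce to compositions).} Map each pair $(\bm n,\bm a)$ to the $2N$-tuple
\[
\bigl(a_1,\,n_1-a_1,\,\ldots,\,a_N,\,n_N-a_N\bigr)\in\mathbb Z_{\ge0}^{2N},
\]
whose entries sum to $K$. The inverse is $n_\sigma=a_\sigma+(n_\sigma-a_\sigma)$, and every nonnegative $2N$-tuple summing to $K$ arises this way. So the map is a bijection onto the weak compositions of $K$ into $2N$ parts. Combinatorially, this just counts the multisets of (encoder state, input bit) pairs over the $K$ users.

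\emph{Step 3 (count).} By stars and bars, the number of weak compositions of $K$ into $2N$ parts is $\binom{K+2N-1}{2N-1}$, which gives \eqref{eq:number_histogram_edges}. As a consistency check, the two-user, $N=2$ example gives $\binom{5}{3}=10$. Directly, the source states $(2,0)$, $(1,1)$, $(0,2)$ contribute $3$, $4$ and $3$ input-count vectors respectively, for a total of $10$, matching Fig.~\ref{fig:histogram_based_state_transition}.
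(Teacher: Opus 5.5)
Your proof is correct. Step 1 coincides with the paper's first step: the paper likewise counts, for each source state $\bm n^{(i)}$, the $\prod_{\sigma}(n^{(i)}_\sigma+1)$ admissible input-count vectors, each fixing a unique target. You are more careful there, since you check that the components $H_{ij,\sigma'}^{r}$ are nonnegative and sum to $K$, so the sets $\mathcal A_{ij}$ over $j$ really partition the feasible vectors; the paper only asserts this.

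The two routes differ only in the final count. The paper writes $C_{\mathrm H}=\sum_{\bm n\in\mathcal N_K}\prod_{\sigma}(n_\sigma+1)$ and reads it off as the coefficient of $x^K$ in $\bigl((1-x)^{-2}\bigr)^{N}=(1-x)^{-2N}$ via the generalized binomial series. You instead give the explicit bijection $(\bm n,\bm a)\mapsto(a_1,n_1-a_1,\ldots,a_N,n_N-a_N)$ onto weak compositions of $K$ into $2N$ parts and finish with stars and bars. These are the same identity seen algebraically and combinatorially: the factorization $(1-x)^{-2}=(1-x)^{-1}(1-x)^{-1}$ is exactly your split $n_\sigma=a_\sigma+(n_\sigma-a_\sigma)$.

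Your version is more elementary and more illuminating. It shows that a histogram edge is just a histogram of the $K$ users over the $2N$ (state, input bit) pairs, so $C_{\mathrm H}$ is the formula for $Q_{\mathrm H}$ with $N$ replaced by $2N$. The generating-function argument is more mechanical and does not require constructing a bijection.
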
 

\begin{proof}
    See Appendix~\ref{proof:histogram_edge_cardinality}
\end{proof}

At each trellis stage, the ACS operation evaluates the
incoming candidate histogram edges for all target states.
Since the total number of such edges is at most
\(C_{\mathrm H}\), decoding over the \(T\)
transitions requires at most
\(TC_{\mathrm H}\) branch evaluations.
Therefore, the computational complexity of the
histogram-state Viterbi decoder is
\begin{equation}
\mathcal O\!\left(
T
\binom{K+2N-1}{2N-1}
\right).
\label{eq:histogram_decoding_complexity}
\end{equation}
For comparison, the full-state joint convolutional decoder
in~\cite{you2023broadband} contains $N^K$ joint states, each with
$2^K$ outgoing transitions, resulting in a decoding complexity of
$\mathcal{O}\!\left(T(2N)^K\right)$. In contrast, for a fixed encoder
memory, the histogram-state decoder complexity in~\eqref{eq:histogram_decoding_complexity} grows polynomially with $K$, indicating that the histogram-state representation avoids the exponential growth with the number of users in both the decoder state space and the number of trellis transitions processed by the \ac{ACS} operation.


\section{Computational Free Distance Via Product
Graph Analysis}
\label{sec:computational_error_events}

In this section, we characterize the geometric distinguishability
of different computation outputs under ideal noiseless superposition.
We introduce the computational free distance and construct a product
graph for its evaluation. This distance will be used as the modulation
design objective in Section~\ref{sec:modulation_design}.

\subsection{Computational Error Events and Free Distance}

For standard convolutional codes, the free distance characterizes
the minimum Euclidean separation between two distinct coded paths that
depart from a common state and subsequently converge~\cite{viterbi1971convolutional}. In \ac{HiCoMAC}, however, distinct
histogram paths do not constitute a computational error if they produce
the same arithmetic-sum sequence. The relevant distance criterion
must therefore distinguish histogram paths according to their
computation outputs~\cite{zehavi1987performance}. We first distinguish histogram paths according to the arithmetic-sum
sequences carried by their information transitions.

\begin{defi}[Computational distinguishability]
\label{def:computational_distinguishability}
Two terminated histogram paths
\(\Pi,\widetilde{\Pi}\in\mathcal P_T\) are computationally equivalent
if $\mathcal M(\Pi_B)
    =
    \mathcal M(\widetilde{\Pi}_B)$,
and computationally distinct otherwise.
\end{defi}

Over a noiseless \ac{MAC}, a computational error can occur when
computationally distinct histogram paths generate identical aggregate
symbol sequences. To characterize and avoid such destructive
overlaps, we evaluate the separation between pairs of histogram paths. Since a pair of computationally distinct terminated histogram paths may diverge and converge multiple times, we partition the
path pair at every common histogram state, so that each resulting
segment pair starts from a common state and converges without an intermediate common state. Moreover, if the complete paths are computationally distinct, at
least one of the resulting segment pairs must produce different
arithmetic-sum label sequences. We
therefore focus on these irreducible segment pairs. Let $\pi=(e_{q_0q_1}^{r_1},\ldots,e_{q_{\nu-1}q_\nu}^{r_\nu})$ and $\widetilde{\pi}
=(e_{\widetilde q_0\widetilde q_1}^{\widetilde r_1},
\ldots,e_{\widetilde q_{\nu-1}\widetilde q_\nu}^{\widetilde r_\nu})$
denote two histogram path segments of length \(\nu\), with
arithmetic-sum sequences \(\mathcal M(\pi)\) and
\(\mathcal M(\widetilde{\pi})\), respectively. 

\begin{defi}[Irreducible computational error event]
\label{def:irreducible_computational_error_event}
The pair \((\pi,\widetilde{\pi})\) is an irreducible computational
error event if
\begin{equation}
    \left\{
    \begin{aligned}
        &q_0=\widetilde q_0,~
        q_\nu=\widetilde q_\nu,\\
        &q_h\neq\widetilde q_h,
        ~ h=1,\ldots,\nu-1,\\
        &\mathcal M(\pi)\neq\mathcal M(\widetilde{\pi}).
    \end{aligned}
    \right.
    \label{eq:irreducible_event_conditions}
\end{equation}
\end{defi}

Mathematically, we measure the separation of this event by the squared Euclidean distance between the two noiseless aggregated
symbols generated by the paired histogram transitions
\begin{equation}
    D(\pi,\widetilde{\pi};\rho)
    =
    \sum\nolimits_{h=1}^{\nu}
    D_{q_{h-1}q_h,\,
    \widetilde q_{h-1}\widetilde q_h}
    ^{r_h,\widetilde r_h}(\rho),
    \label{eq:computational_event_distance}
\end{equation}
where
\begin{equation}
    D_{ij,\widetilde i\widetilde j}^{r,\widetilde r}(\rho)
    =
    \left|
    u_{ij}^{r}(\rho)
    -
    u_{\widetilde i\widetilde j}^{\widetilde r}(\rho)
    \right|^2.
\label{eq:paired_histogram_edge_distance}
\end{equation}

In particular, \(D(\pi,\widetilde{\pi};\rho)=0\) indicates a destructive overlap of the two computationally distinct path segments  in the
noiseless aggregated signal space. To characterize the minimum separation over all such events, we define the following computational free distance.

\begin{defi}[Computational free distance]
\label{def:computational_free_distance}
Let \(\mathcal C_{\mathrm{comp}}\) denote the set of all finite
irreducible computational error events. For a given ratio \(\rho\),
the computational free distance is
\begin{equation}
    D_{\mathrm{comp}}(\rho)
    =
    \min_{(\pi,\widetilde{\pi})\in\mathcal C_{\mathrm{comp}}}
    D(\pi,\widetilde{\pi};\rho).
    \label{eq:computational_free_distance}
\end{equation}
\end{defi}

\begin{rem}
\label{rem:finite_block_distance}
Note that the minimization in
Definition~\ref{def:computational_free_distance} is over all
irreducible computational error events, without requiring the event
to start from the all-zero state, span the complete terminated path.
Consider any pair of computationally distinct terminated paths in
\(\mathcal P_T\). The path pair can be partitioned at its common
histogram states into segment pairs, at least one of which is an
irreducible computational error event. Since the squared Euclidean
distance between the terminated paths is the sum of the nonnegative
distances of these segment pairs, \(D_{\mathrm{comp}}(\rho)\)
lower bounds the minimum distance between computationally distinct
terminated paths and is used as the modulation design criterion.
\end{rem}

Evaluating \(D_{\mathrm{comp}}(\rho)\) requires searching over
irreducible computational error events departing from all possible
common histogram states. Since  the event distance in
\eqref{eq:computational_event_distance} is additive over paired
histogram transitions, we would show that these path pairs can instead be represented
as directed paths in a weighted product graph. In the next
subsections, we construct this graph and show that
\(D_{\mathrm{comp}}(\rho)\) equals its minimum accumulated path
weight between the divergence and convergence vertices.

\subsection{Product Graph Representation}
\label{subsec:product_graph_representation}

\subsubsection{Product Multigraph for Paired Histogram Transitions}
\label{subsubsec:product_multigraph}

This subsection constructs a directed product multigraph to represent
the path segment pairs in
\(\mathcal C_{\mathrm{comp}}\)~\cite{diallo2011efficient}.
Consider the histogram-state transition multigraph
\(\mathcal G_{\mathrm H}
=(\mathcal V_{\mathrm H},\mathcal E_{\mathrm H})\).
The product graph associated with pairs of histogram path segments is
a directed multigraph
\(\mathcal G_{\mathrm P}
=(\mathcal V_{\mathrm P},\mathcal E_{\mathrm P})\).
For any pair of path segments, the product vertex
\(p_{i\widetilde i}^{\eta}\) records the histogram state pairs \(\boldsymbol n^{(i)}\) and
\(\boldsymbol n^{(\widetilde i)}\), together with a binary flag
\(\eta\). The flag \(\eta=0\) indicates that their arithmetic-sum
label sequences are identical so far, whereas \(\eta=1\) indicates
that they have differed in at least one transition. Accordingly, we define the product vertex set as
\begin{equation}
    \mathcal V_{\mathrm P}
    =
    \left\{
    p_{i\widetilde i}^{\eta}
    \,\middle|\,
    i,\widetilde i\in\{1,\ldots,Q_{\mathrm H}\},
    \ \eta\in\{0,1\}
    \right\},
    \label{eq:product_vertex_set}
\end{equation}
where we distinguish product
vertices according to whether the two histogram state are
identical. A product vertex $p_{ii}^{\eta}$ is referred to as a \emph{diagonal
product vertex}, while $p_{i\tilde i}^{\eta}$
with $i \neq \tilde i$ is referred to as an \emph{off-diagonal product vertex}.

For a product vertex \(p_{i\widetilde i}^{\eta}\), consider a pair
of histogram edges \(e_{ij}^{r}\) and
\(e_{\widetilde i\widetilde j}^{\widetilde r}\), taken by the first
and second path segments, respectively. This paired transition gives
a directed product edge from \(p_{i\widetilde i}^{\eta}\) to
\(p_{j\widetilde j}^{\eta'}\), where
\begin{equation}
    \eta'
    =
    \max
    \left\{
    \eta,\,
    \mathbbm 1
    \left[
    S_{ij}^{r}
    \neq
    S_{\widetilde i\widetilde j}^{\widetilde r}
    \right]
    \right\}.
    \label{eq:product_flag_update}
\end{equation}
The corresponding product edge is denoted by
\(\zeta_{i\widetilde i,j\widetilde j}
^{\eta,\eta',r,\widetilde r}\), and collecting all pairs of distinct histogram edges gives
\begin{equation}
\begin{aligned}
    \mathcal E_{\mathrm P}
    =
    \Bigl\{
    &\zeta_{i\widetilde i,j\widetilde j}
    ^{\eta,\eta',r,\widetilde r}
    \,\Bigm|\,
    e_{ij}^{r},
    e_{\widetilde i\widetilde j}^{\widetilde r}
    \in\mathcal E_{\mathrm H}, ~e_{ij}^{r} \neq
    e_{\widetilde i\widetilde j}^{\widetilde r}
    \Bigr\}.
    \label{eq:product_edge_set}
\end{aligned}
\end{equation}

Correspondingly, for a product edge
\(\zeta_{i\widetilde i,j\widetilde j}
^{\eta,\eta',r,\widetilde r}\),
its squared Euclidean distance can be calculated as $D_{ij,\widetilde i\widetilde j}^{r,\widetilde r}(\rho)$ based on~\eqref{eq:paired_histogram_edge_distance}.
Accordingly, the accumulated
distance along the product graph path corresponding to
\((\pi,\widetilde{\pi})\) therefore equals
\(D(\pi,\widetilde{\pi};\rho)\) in
\eqref{eq:computational_event_distance}.

Each irreducible computational error event in
\(\mathcal C_{\mathrm{comp}}\) determines a directed path in
\(\mathcal G_{\mathrm P}\). However, the full product multigraph also
contains paired paths that do not satisfy the irreducibility
conditions in
Definition~\ref{def:irreducible_computational_error_event}. We therefore construct a modified product graph by containing only divergence, off-diagonal, and convergence
transitions.

\subsubsection{Modified Product Graph for Irreducible Events}
\label{subsubsec:modified_product_graph}

We first identify the diagonal product vertices to be merged. Before
divergence, both paths occupy the same histogram state and their
arithmetic-sum label sequences are identical, corresponding to a
diagonal product vertex \(p_{ii}^{0}\). Such a vertex can initiate an irreducible event if a pair of distinct outgoing histogram edges either reaches different histogram states or reaches the same state with different arithmetic-sum labels. Accordingly, the corresponding divergence vertices are collected in
\begin{align}
    \mathcal V_{\mathrm{div}}
    ={}&
    \left\{
    p_{ii}^{0}
    \,\middle|\,
    \exists\,
    \zeta_{ii,jj}^{0,1,r,\widetilde r}
    \in\mathcal E_{\mathrm P}
    \right\}
    \nonumber\\
    &{}\cup
    \left\{
    p_{ii}^{0}
    \,\middle|\,
    \exists\,
    \zeta_{ii,j\widetilde j}
    ^{0,\eta',r,\widetilde r}
    \in\mathcal E_{\mathrm P},
    \ j\neq\widetilde j
    \right\}.
    \label{eq:divergence_vertex_set}
\end{align}

An irreducible event terminates when the path segments reach the same
histogram state after their arithmetic-sum label sequences have
differed. This convergence may occur after the paths have occupied different
histogram states or 
directly in the divergence transition. Accordingly, the corresponding convergence vertices are collected in
\begin{align}
    \mathcal V_{\mathrm{conv}}
    ={}&
    \left\{
    p_{jj}^{1}
    \,\middle|\,
    \exists\,
    \zeta_{i\widetilde i,jj}^{\eta,1,r,\widetilde r}
    \in\mathcal E_{\mathrm P},
    \ i\neq\widetilde i
    \right\}
    \nonumber\\
    &{}\cup
    \left\{
    p_{jj}^{1}
    \,\middle|\,
    \exists\,
    \zeta_{ii,jj}^{0,1,r,\widetilde r}
    \in\mathcal E_{\mathrm P},
    \ p_{ii}^{0}\in\mathcal V_{\mathrm{div}}
    \right\}.
    \label{eq:convergence_vertex_set}
\end{align}

Since the computational free distance considers irreducible events departing from any common histogram state and terminating at any first valid convergence, we merge all vertices in \(\mathcal V_{\mathrm{div}}\) into a single initial vertex \(p_{\mathrm{in}}\), and all vertices in \(\mathcal V_{\mathrm{conv}}\) into a single final vertex \(p_{\mathrm{out}}\). The off-diagonal product vertices are retained as internal vertices and collected in
\begin{equation}
    \mathcal V_{\mathrm{off}}
    =
    \left\{
    p_{ij}^{\eta}
    \in\mathcal V_{\mathrm P}
    \,\middle|\,
    i\neq j
    \right\}.
    \label{eq:off_diagonal_product_vertex_set}
\end{equation}
These merged and retained vertices form the modified product vertex set $\mathcal V_{\mathrm M}=
    \mathcal V_{\mathrm{off}}
    \cup\{p_{\mathrm{in}},p_{\mathrm{out}}\}$.
After the vertex merging, a directed connection between two modified product vertices is referred to as a modified product edge. The possible modified product edges are
\begin{subequations}
\label{eq:modified_product_edge_types}
\begin{align}
    \varepsilon_{\mathrm{in},j\tilde j}^{\eta'}
    &=
    \left(
    p_{\mathrm{in}},
    p_{j\tilde j}^{\eta'}
    \right),
    ~j\neq\tilde j,
    \label{eq:initial_modified_product_edge}
    \\
    \varepsilon_{i\tilde i,j\tilde j}^{\eta,\eta'}
    &=
    \left(
    p_{i\tilde i}^{\eta},
    p_{j\tilde j}^{\eta'}
    \right),
    ~i\neq\tilde i, ~ j\neq\tilde j,
    \label{eq:internal_modified_product_edge}
    \\
    \varepsilon_{i\tilde i,\mathrm{out}}^{\eta}
    &=
    \left(
    p_{i\tilde i}^{\eta},
    p_{\mathrm{out}}
    \right),
    ~ i\neq\tilde i,
    \label{eq:terminal_modified_product_edge}
    \\
    \varepsilon_{\mathrm{in},\mathrm{out}}
    &=
    \left(
    p_{\mathrm{in}},
    p_{\mathrm{out}}
    \right).
    \label{eq:direct_modified_product_edge}
\end{align}
\end{subequations}

After the vertex merging, all product edges with the same resulting
source and destination vertices are merged into one modified product
edge, including parallel edges already present in
\(\mathcal G_{\mathrm P}\). For each of the four edge forms in
\eqref{eq:modified_product_edge_types}, the merged product edges construct
the corresponding candidate set, defined as
\begin{subequations}
\label{eq:modified_product_edge_candidate_sets}
\begin{align}
    \mathcal C_{\mathrm{in},j\widetilde j}^{\eta'}
    &=
    \left\{
    \zeta_{ii,j\widetilde j}^{0,\eta',r,\widetilde r}
    \in\mathcal E_{\mathrm P}
    \,\middle|\,
    p_{ii}^{0}\in\mathcal V_{\mathrm{div}}
    \right\},
    \label{eq:initial_candidate_set}\\
    \mathcal C_{i\widetilde i,j\widetilde j}
    ^{\eta,\eta'}
    &=
    \left\{
    \zeta_{i\widetilde i,j\widetilde j}
    ^{\eta,\eta',r,\widetilde r}
    \in\mathcal E_{\mathrm P}
    \right\},
    \label{eq:internal_candidate_set}\\
    \mathcal C_{i\widetilde i,\mathrm{out}}^{\eta}
    &=
    \left\{
    \zeta_{i\widetilde i,jj}^{\eta,1,r,\widetilde r}
    \in\mathcal E_{\mathrm P}
    \,\middle|\,
    p_{jj}^{1}\in\mathcal V_{\mathrm{conv}}
    \right\},
    \label{eq:terminal_candidate_set}\\
    \mathcal C_{\mathrm{in},\mathrm{out}}
    &=
    \left\{
    \zeta_{ii,jj}^{0,1,r,\widetilde r}
    \in\mathcal E_{\mathrm P}
    \,\middle|\,
    p_{ii}^{0}\in\mathcal V_{\mathrm{div}},\
    p_{jj}^{1}\in\mathcal V_{\mathrm{conv}}
    \right\}.
    \label{eq:direct_candidate_set}
\end{align}
\end{subequations}

The modified product edge set \(\mathcal E_{\mathrm M}\) contains all edges in \eqref{eq:modified_product_edge_types} with nonempty
candidate sets.
Therefore, we obtain a modified product graph 
$\mathcal G_{\mathrm M}=\left(\mathcal V_{\mathrm M},\mathcal E_{\mathrm M}\right)$, which contains all transitions needed for evaluating the computational free distance $D_{\rm comp}$. 

\subsection{Fixed Ratio Shortest Path Characterization}
\label{subsec:shortest_path_characterization}


Consider a modified product edge \(\varepsilon\in\mathcal E_{\mathrm M}\)
of any form and let \(\mathcal C_{\varepsilon}\) denote its
corresponding candidate set, with the indices omitted for brevity.
Since all product edges $\zeta$ in \(\mathcal C_{\varepsilon}\) have the same
modified source and destination vertices, the weight assigned to
\(\varepsilon\) is defined as the minimum of their distances:
\begin{equation}
    w_{\mathrm M}(\varepsilon;\rho)
    =
    \min_{\zeta\in\mathcal C_{\varepsilon}}
    D_{\zeta}(\rho).
    \label{eq:modified_product_edge_weight}
\end{equation}

With the weight $w_{\mathrm M}(\varepsilon;\rho)$ assigned to the edge
\(\varepsilon\), we obtain the weighted modified product graph $\mathcal G_{\mathrm M}(\rho)=
    \left(
    \mathcal V_{\mathrm M},
    \mathcal E_{\mathrm M},
    w_{\mathrm M}(\cdot;\rho)
    \right)$.
Let \(\mathcal P_{\mathrm M}\) denote the set of all directed paths
from \(p_{\mathrm{in}}\) to \(p_{\mathrm{out}}\) in
\(\mathcal G_{\mathrm M}(\rho)\). By construction of the modified product graph, every irreducible
computational error event
\((\pi,\widetilde{\pi})\in\mathcal C_{\mathrm{comp}}\)
determines a directed path $\omega
    =\left(
    \varepsilon^{(1)},\ldots,\varepsilon^{(\nu)}
    \right)
    \in\mathcal P_{\mathrm M}$.
Conversely, selecting one product edge from the candidate set of each
modified product edge along \(\omega\) determines a corresponding
pair of histogram path segments. For the fixed ratio \(\rho\), the accumulated distance along
\(\omega\) is defined as
\begin{equation}
    W_{\mathrm M}(\omega;\rho)
    =
    \sum\nolimits_{h=1}^{\nu}
    w_{\mathrm M}
    \left(
    \varepsilon^{(h)};\rho
    \right).
    \label{eq:modified_product_path_weight}
\end{equation}

Since each modified product edge weight is the minimum distance over its candidate set, \(W_{\mathrm M}(\omega;\rho)\) is the minimum
distance among the irreducible computational error events represented by \(\omega\). Hence, the computational free distance is
obtained by minimizing \(W_{\mathrm M}(\omega;\rho)\) over \(\mathcal P_{\mathrm M}\).

\input{Figures/modified_product_graph}

\begin{theorem}[shortest path characterization of computational
free distance]
\label{thm:shortest_path_characterization}
For any fixed rectangular ratio \(\rho\), the computational free
distance satisfies
\begin{equation}
    D_{\mathrm{comp}}(\rho)
    =
    \min_{\omega\in\mathcal P_{\mathrm M}}
    W_{\mathrm M}(\omega;\rho).
    \label{eq:shortest_path_characterization}
\end{equation}
Hence, \(D_{\mathrm{comp}}(\rho)\) is the shortest path distance from
\(p_{\mathrm{in}}\) to \(p_{\mathrm{out}}\) in
\(\mathcal G_{\mathrm M}(\rho)\).
\end{theorem}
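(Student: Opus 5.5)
The proof shows two inequalities. Write $D^\star(\rho)=\min_{\omega\in\mathcal P_{\mathrm M}}W_{\mathrm M}(\omega;\rho)$. The argument rests on a correspondence between irreducible computational error events and directed $p_{\mathrm{in}}\to p_{\mathrm{out}}$ paths in $\mathcal G_{\mathrm M}$, together with the fact that each edge weight in \eqref{eq:modified_product_edge_weight} is a minimum over its candidate set. Because all weights are nonnegative squared distances, both minima are attained, or can be treated as infima without changing the argument.

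\emph{Step 1 ($D^\star\le D_{\mathrm{comp}}$).} Take any $(\pi,\widetilde\pi)\in\mathcal C_{\mathrm{comp}}$ of length $\nu$. Let $\eta_h$ be the running flag, i.e., $\eta_h=1$ iff some label among the first $h$ transitions differs; this is the recursion \eqref{eq:product_flag_update} started from $\eta_0=0$.

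Case $\nu\ge2$: the first pair of edges goes from $p^{0}_{q_0q_0}$ to the off-diagonal vertex $p^{\eta_1}_{q_1\widetilde q_1}$. These edges are distinct since $q_1\neq\widetilde q_1$. Hence $p^0_{q_0q_0}\in\mathcal V_{\mathrm{div}}$ by the second set in \eqref{eq:divergence_vertex_set}, and the product edge lies in $\mathcal C^{\eta_1}_{\mathrm{in},q_1\widetilde q_1}$. The intermediate transitions stay off-diagonal by \eqref{eq:irreducible_event_conditions}, so they lie in internal candidate sets. For the last transition, the condition $\mathcal M(\pi)\neq\mathcal M(\widetilde\pi)$ forces $\eta_\nu=1$, so it enters $p^1_{q_\nu q_\nu}$ from an off-diagonal vertex. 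This vertex is in $\mathcal V_{\mathrm{conv}}$ by the first set in \eqref{eq:convergence_vertex_set}, and the edge lies in $\mathcal C^{\eta_{\nu-1}}_{q_{\nu-1}\widetilde q_{\nu-1},\mathrm{out}}$.

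Case $\nu=1$: the labels differ, so the two edges are distinct and the product edge is $\zeta^{0,1}_{q_0q_0,q_1q_1}$. Its source is in $\mathcal V_{\mathrm{div}}$ by the first set in \eqref{eq:divergence_vertex_set}, its target is in $\mathcal V_{\mathrm{conv}}$ by the second set in \eqref{eq:convergence_vertex_set}, and the edge lies in $\mathcal C_{\mathrm{in},\mathrm{out}}$.

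In both cases we obtain $\omega\in\mathcal P_{\mathrm M}$. Each summand $D^{r_h,\widetilde r_h}_{\cdots}(\rho)$ in \eqref{eq:computational_event_distance} is at least the weight of the corresponding modified edge, so $W_{\mathrm M}(\omega;\rho)\le D(\pi,\widetilde\pi;\rho)$. Minimizing over events gives $D^\star\le D_{\mathrm{comp}}$.

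\emph{Step 2 ($D_{\mathrm{comp}}\le D^\star$).} Take any $\omega=(\varepsilon^{(1)},\dots,\varepsilon^{(\nu)})\in\mathcal P_{\mathrm M}$. For each $\varepsilon^{(h)}$ pick a minimizing product edge $\zeta^{(h)}$ in its candidate set. Then the weights match exactly and $\sum_h D_{\zeta^{(h)}}=W_{\mathrm M}(\omega;\rho)$. Each $\zeta^{(h)}$ is a pair of histogram edges, and we read off the two segments $\pi,\widetilde\pi$ from these pairs.

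\emph{Main obstacle: gluing the pieces into a valid event.} We must check that the chosen pieces concatenate and satisfy \eqref{eq:irreducible_event_conditions}.

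Consecutive internal and terminal/initial edges chain correctly because the shared off-diagonal vertex $p^{\eta}_{j\widetilde j}$ fixes both histogram states and the flag. Merging destroyed this information only at $p_{\mathrm{in}}$ and $p_{\mathrm{out}}$, and those are endpoints of $\omega$ (a path from $p_{\mathrm{in}}$ to $p_{\mathrm{out}}$ visits them only at its ends, since $p_{\mathrm{in}}$ has no incoming and $p_{\mathrm{out}}$ no outgoing modified edges). So the first segment pair starts at a common state $q_0=\widetilde q_0$, and the last ends at a common state.

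All intermediate vertices are off-diagonal, which gives $q_h\neq\widetilde q_h$. Flags are monotone under \eqref{eq:product_flag_update}, and the terminal candidate sets only contain edges with $\eta'=1$. The flag carried along the chain is consistent at the retained vertices, and it starts at $0$ because every element of an initial candidate set leaves a vertex $p^0_{ii}$. Hence reaching $\eta=1$ certifies that some label pair differs, i.e., $\mathcal M(\pi)\neq\mathcal M(\widetilde\pi)$.

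The only possible subtlety is whether the minimizer in a candidate set is compatible with the flag values. It is, because the flag is part of the vertex indices: every candidate in $\mathcal C^{\eta,\eta'}_{i\widetilde i,j\widetilde j}$ has exactly that source and target flag. So $(\pi,\widetilde\pi)\in\mathcal C_{\mathrm{comp}}$ and $D_{\mathrm{comp}}\le D(\pi,\widetilde\pi;\rho)=W_{\mathrm M}(\omega;\rho)$.

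Combining the two steps gives \eqref{eq:shortest_path_characterization}. Since $\mathcal G_{\mathrm M}(\rho)$ is finite with nonnegative weights, the right-hand side is the standard shortest-path distance from $p_{\mathrm{in}}$ to $p_{\mathrm{out}}$, attained by a simple path and computable by Dijkstra's algorithm. Attainment also makes the minimum in Definition~\ref{def:computational_free_distance} well defined.
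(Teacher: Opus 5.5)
Your proposal is correct and follows the same two-inequality argument as the paper: you bound each event distance from below by the sum of edgewise candidate-set minima to get $W_{\mathrm M}(\omega;\rho)\le D(\pi,\widetilde\pi;\rho)$, and you pick a minimizer in each candidate set for the reverse inequality. Your case split on $\nu$ and your gluing check (consistent chaining at the retained off-diagonal vertices, flag starting at $0$ and monotone up to $1$) make rigorous the event--path correspondence that the paper only asserts ``by construction.''
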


\begin{proof}
See Appendix~\ref{app:shortest_path_characterization}.
\end{proof}

For the illustrative two-user example, 
Fig.~\ref{fig:modified_product_graph} provides a graphical
representation of the weighted modified product graph constructed
from the transition graph in
Fig.~\ref{fig:histogram_based_state_transition}.

\section{Modulation Design and Solution}
\label{sec:modulation_design}

This section develops the computation-oriented modulation design of
\ac{HiCoMAC} based on the computational free distance characterized in
Section~\ref{sec:computational_error_events}. We first formulate the
rectangular-ratio design problem, then develop a coarse-to-fine search
based on the shortest path characterization, followed by an analysis of
its offline computational complexity.

\subsection{Modulation Design Formulation}
\label{subsec:ratio_parameterized_distances}

In this subsection, we design the scaling of the \ac{QAM} constellation by optimizing its rectangular ratio. As discussed in Remark~\ref{rem:finite_block_distance},
\(D_{\mathrm{comp}}(\rho)\) provides a lower bound on the minimum computational distance of the finite terminated trellis. To enhance this computational distinguishability under a fixed average symbol energy, we pose the constellation design problem as follows\footnote{The present geometric criterion may be further
extended by incorporating the multiplicities and prior
probabilities of computational error events, as well as the
different numerical significance of errors across bit positions,
into the modulation design.}
\begin{equation}
    \mathcal{P}_0:=\max_{\rho_{\min}\leq\rho\leq\rho_{\max}}
    D_{\mathrm{comp}}(\rho),
    \label{eq:rectangular_ratio_design}
\end{equation}
where \([\rho_{\min},\rho_{\max}]\) is a prescribed search interval.

For notational brevity, let \(e\) denote an arbitrary histogram edge,
and the ratio-independent coefficients 
are defined as
\begin{subequations}
\label{eq:histogram_edge_coefficients}
\begin{align}
    A_e
    &=
    \sum_{\sigma\in\mathcal S_{\mathrm c}}
    \Big[(n_\sigma^{(i)}-a_{ij,\sigma}^r)
    \widetilde{x}_{\lambda(\gamma(\sigma,0)),\mathrm I}
    +a_{ij,\sigma}^r \widetilde{x}_{\lambda(\gamma(\sigma,1)),\mathrm I}
    \Big], \nonumber \\ \nonumber
    B_e
    &=
    \sum_{\sigma\in\mathcal S_{\mathrm c}}
    \Big[(n_\sigma^{(i)}-a_{ij,\sigma}^r)
    \widetilde{x}_{\lambda(\gamma(\sigma,0)),\mathrm Q}
    +a_{ij,\sigma}^r \widetilde{x}_{\lambda(\gamma(\sigma,1)),\mathrm Q}
    \Big].
    \label{eq:histogram_edge_quadrature_coefficient}
\end{align}
\end{subequations}

Consider a product edge \(\zeta\) associated with the histogram edge
pair \((e,\widetilde e)\), the corresponding distance is given by
\begin{equation}
    D_{\zeta}(\rho)
    = \xi^2(\rho)
    (\rho^2\alpha_{\zeta}
    +
    \beta_{\zeta}),
    \label{eq:parametric_product_edge_distance}
\end{equation}
where $\alpha_{\zeta}
    =\left(A_e-A_{\widetilde e}\right)^2$ and $\beta_{\zeta}
    =\left(B_e-B_{\widetilde e}\right)^2$.
Accordingly, Problem
$\mathcal{P}_0$ can be written explicitly as
\begin{equation}
\begin{aligned}
    \nonumber
    \mathcal{P}_0:=\max_{\rho_{\min}\leq\rho\leq\rho_{\max}}
    ~\xi^2(\rho)
    \min_{\omega\in\mathcal P_{\mathrm M}}
    \sum_{h=1}^{\nu}
    \min_{\zeta\in\mathcal C_{\varepsilon^{(h)}}}
    \left(
        \rho^2\alpha_{\zeta}
        +\beta_{\zeta}
    \right).
\end{aligned}
\label{eq:parametric_rectangular_ratio_design}
\end{equation}

The component energies $E_I$ and $E_Q$
in $\xi(\rho)$ are evaluated once for the considered encoder, quantization precision and termination rule, and remain fixed throughout the ratio search.
For any trial ratio \(\rho\), the objective is evaluated by updating the modified product edge weights and computing the shortest path from \(p_{\mathrm{in}}\) to \(p_{\mathrm{out}}\) in
\(\mathcal G_{\mathrm M}(\rho)\). Since the objective can be nondifferentiable, We therefore employ a search based on an initial coarse grid followed by local grid refinement.

\subsection{Coarse-to-Fine Ratio Search}
\label{subsec:coarse_to_fine_ratio_search}

\begin{algorithm}[!t]
\caption{Coarse-to-Fine Ratio Search}
\label{alg:ratio_search}
\begin{algorithmic}[1]
\STATE \textbf{Input:}
Modified product graph
\(\mathcal G_{\mathrm M}\),
candidate sets
\(\{\mathcal C_{\varepsilon}\}_{\varepsilon\in\mathcal E_{\mathrm M}}\),
search interval
\([\rho_{\min},\rho_{\max}]\), tolerance
\(\epsilon_\rho\).

\STATE Precompute the product edge coefficients $\alpha_{\zeta}$ and $\beta_{\zeta}$.
\STATE Generate initial grid \(\mathcal R_0\). 
Set
\(\mathcal R_{\mathrm{eval}}\leftarrow\emptyset\)
and \(t\leftarrow 0\).

\FOR{each \(\rho\in\mathcal R_0\)}
    \STATE Evaluate \(D_{\mathrm{comp}}(\rho)\) using Dijkstra on \(\mathcal G_{\mathrm M}(\rho)\).
    \STATE
    Set \(\mathcal R_{\mathrm{eval}}
    \leftarrow
    \mathcal R_{\mathrm{eval}}\cup\{\rho\}\).
\ENDFOR

\STATE Select \(\hat{\rho}_0\) using
\eqref{eq:best_coarse_ratio}.

\WHILE{\(\Delta_t>\epsilon_\rho\)}
    \STATE Generate the local grid \(\mathcal R_{t+1}\) using
    \eqref{eq:local_refinement_interval}--\eqref{eq:local_grid_spacing}.

    \FOR{each
    \(\rho\in
    \mathcal R_{t+1}\setminus\mathcal R_{\mathrm{eval}}\)}
        \STATE Evaluate \(D_{\mathrm{comp}}(\rho)\) using Dijkstra on \(\mathcal G_{\mathrm M}(\rho)\).
        \STATE
        Set \(\mathcal R_{\mathrm{eval}}
        \leftarrow
        \mathcal R_{\mathrm{eval}}\cup\{\rho\}\).
    \ENDFOR

    \STATE Select \(\hat{\rho}_{t+1}\) using
    \eqref{eq:best_ratio_update}, and update $\Delta t$ using \eqref{eq:local_grid_spacing}.
    \STATE
    Set \(t\leftarrow t+1\).
\ENDWHILE

\STATE Set \(\hat{\rho}\leftarrow\hat{\rho}_t\).
\STATE \textbf{Output:}
\(\hat{\rho}\) and \(D_{\mathrm{comp}}(\hat{\rho})\).
\end{algorithmic}
\end{algorithm}

Let \(G\) denote the number of grid points used in each search round, and the initial grid is defined as
\begin{equation}
    \mathcal R_0
    =
    \left\{
    \rho_{\min}+i\Delta_0
    \Bigm|
    i=0,\ldots,G-1
    \right\},
    \label{eq:initial_ratio_grid}
\end{equation}
where
\begin{equation}
    \Delta_0
    =
    \frac{\rho_{\max}-\rho_{\min}}{G-1}.
    \label{eq:initial_grid_spacing}
\end{equation}
For each \(\rho\in\mathcal R_0\),
\(D_{\mathrm{comp}}(\rho)\) is evaluated using Dijkstra
algorithm~\cite{magzhan2013review} according to
Theorem~\ref{thm:shortest_path_characterization}.
After evaluating the initial grid, we set
\(\mathcal R_{\mathrm{eval}}=\mathcal R_0\), and select
\begin{equation}
\label{eq:best_coarse_ratio}
\hat{\rho}_0
=
    \arg\max_{\rho\in\mathcal R_{\mathrm{eval}}}
    D_{\mathrm{comp}}(\rho).
\end{equation}

At refinement round \(t\), let \(\hat{\rho}_t\) and
\(\Delta_t\) denote the currently selected ratio and grid spacing,
respectively. The next search interval is centered at
\(\hat{\rho}_t\) and restricted to the prescribed range:
\begin{subequations}
\label{eq:local_refinement_interval}
\begin{align}
    \rho_{\min}^{(t+1)}
    &=
    \max\left\{
    \rho_{\min},
    \hat{\rho}_t-\Delta_t
    \right\},
    \\
    \rho_{\max}^{(t+1)}
    &=
    \min\left\{
    \rho_{\max},
    \hat{\rho}_t+\Delta_t
    \right\}.
\end{align}
\end{subequations}

The local grid at refinement round \(t+1\) is defined as
\begin{equation}
    \mathcal R_{t+1}
    =
    \left\{
    \rho_{\min}^{(t+1)}+i\Delta_{t+1}
    \Bigm|
    i=0,\ldots,G-1
    \right\},
    \label{eq:local_ratio_grid}
\end{equation}
where
\begin{equation}
    \Delta_{t+1}
    =
    \frac{
    \rho_{\max}^{(t+1)}
    -
    \rho_{\min}^{(t+1)}
    }{
    G-1
    }.
    \label{eq:local_grid_spacing}
\end{equation}

Only the new points in
\(\mathcal R_{t+1}\setminus\mathcal R_{\mathrm{eval}}\) are evaluated.
The evaluated set and the selected ratio are then updated as
\begin{subequations}
\label{eq:ratio_search_update}
\begin{align}
    \mathcal R_{\mathrm{eval}}
    &\leftarrow
    \mathcal R_{\mathrm{eval}}\cup\mathcal R_{t+1},
    \\
    \hat{\rho}_{t+1}
    &=
    \arg\max_{\rho\in\mathcal R_{\mathrm{eval}}}
    D_{\mathrm{comp}}(\rho).
    \label{eq:best_ratio_update}
\end{align}
\end{subequations}

The refinement terminates when
\(\Delta_{t+1}\leq\epsilon_\rho\), where
\(\epsilon_\rho\) is the grid spacing tolerance. The final selected
ratio is denoted by \(\hat{\rho}\) and returned as an approximate
solution to $\mathcal{P}_0$. The complete procedure
is summarized in Algorithm~\ref{alg:ratio_search}. This coarse-to-fine
procedure provides an efficient approximate search over the prescribed
ratio interval, with the resolution controlled by
\(\epsilon_\rho\).

\begin{figure}[!t]
\centering
\begin{tikzpicture}
    \begin{axis}[
        xlabel = {SNR},
        ylabel = {NMSE},
        label style={font=\footnotesize},
        width=0.37\textwidth,
        height=5cm,
        xmin=0, xmax=14,
        ymin=0.0000001, ymax=0.1,
        legend style={nodes={scale=0.65, transform shape}, at={(0.3,0.85)}},
        ticklabel style = {font=\footnotesize},
        legend pos=south west,
        ymajorgrids=true,
        xmajorgrids=true,
        grid style=dashed,
        grid=both,
        ymode = log,
        grid style={line width=.1pt, draw=gray!10},
        major grid style={line width=.2pt,draw=gray!30},
    ]
    \addplot[ smooth,
             thin,
             dashed,
        color=chestnut,
        mark=*,
        line width=0.75pt,
        mark size=1pt,
        ]
    table[x=SNR,y=exact_map_nmse]
    {Data/MAP_accuracy.dat};
    \addlegendentry{exact MAP}
    \addplot[ smooth,
             thin,
             color=airforceblue,
        mark=square,
        mark options = {rotate = 45, solid},
        line width=0.75pt,
        mark size=2pt,
        ]
    table[x=SNR,y=maxlog_histogram_nmse]
    {Data/MAP_accuracy.dat};
    \addlegendentry{log-max MAP}
\end{axis}
\end{tikzpicture}
\vspace{-1em}
\caption{{\color{black}NMSE comparison between exact MAP decoding and log-max MAP
histogram-state decoding. 
}}
\label{fig:MAP accuracy}
\end{figure}

\vspace{-1em}

\subsection{Computational Complexity}

\begin{table*}[t]
    \centering
    \caption{Computational free distance and NMSE gain for different encoder and modulation configurations
    .}
    \label{tab:dcomp_nmse_gain_k2}
    \setlength{\tabcolsep}{4.2pt}
    \renewcommand{\arraystretch}{1.1}
    \begin{tabular}{|c|c|c|c|c|c|c|c|}
        \hline
        Memory order & Generators & Coding rate & QAM & $\hat{\rho}$ & $D_{\mathrm{comp}}(1)$ & $D_{\mathrm{comp}}(\hat{\rho})$ & $G_{\mathrm{NMSE}}$ at 3/5/7 dB \\
        \hline
        $1$ & $(1,3)_{8}$ & $1/2$ & 4QAM & 1.225 & 6.000 & 6.400 & 0.21/1.77/0.82 \\
        $3$ & $(1,15)_{8}$ & $1/2$ & 4QAM & 1.000 & 8.000 & 8.000 & 0.00/0.00/0.00 \\
        $2$ & $(1,3,5,7)_{8}$ & $1/4$ & 16QAM & 0.577 & 5.478 & 6.330 & 0.66/3.73/3.35 \\
        $3$ & $(1,11,13,17)_{8}$ & $1/4$ & 16QAM & 0.775 & 7.200 & 8.000 & 0.10/0.74/4.04 \\
        \hline
    \end{tabular}
    \vspace{1mm}
\end{table*}

We analyze the computational complexity of the
proposed modulation design. 
Since a product vertex records an ordered pair of histogram
states and a binary flag, the full product multigraph contains $|\mathcal V_{\mathrm P}|=2Q_{\mathrm H}^{2}$ vertices. Similarly, pairing two distinct histogram edges for each source flag gives $|\mathcal E_{\mathrm P}|=2C_{\mathrm H}(C_{\mathrm H}-1)$ edges.
The modified product graph retains the two flag values for each ordered pair of distinct histogram states, together with \(p_{\mathrm{in}}\) and \(p_{\mathrm{out}}\), and hence
$|\mathcal V_{\mathrm M}|=2Q_{\mathrm H}(Q_{\mathrm H}-1)+2$. Let \(Q_{\mathrm{off}}=Q_{\mathrm H}(Q_{\mathrm H}-1)\), and the total number of modified edges is bounded by
\begin{equation}
    |\mathcal E_{\mathrm M}|
    \leq
    \min\!\left\{
    |\mathcal E_{\mathrm P}|,
    3Q_{\mathrm{off}}^{2}+4Q_{\mathrm{off}}+1
    \right\}.
\end{equation}

The modified graph and its candidate sets are constructed by
a single traversal of the product edges. Parallel product edges
mapped to the same modified edge are represented by their
precomputed coefficient pairs
\((\alpha_\zeta,\beta_\zeta)\), resulting in
\(\mathcal O(|\mathcal E_{\mathrm P}|)\) construction
complexity and
\(\mathcal O(|\mathcal V_{\mathrm M}|+
|\mathcal E_{\mathrm M}|+
|\mathcal E_{\mathrm P}|)\) storage.
For each trial ratio \(\rho\), the modified edge weight update
requires at most \(\mathcal O(|\mathcal E_{\mathrm P}|)\)
operations, followed by a Dijkstra search with complexity
\(\mathcal O((|\mathcal V_{\mathrm M}|+
|\mathcal E_{\mathrm M}|)\log|\mathcal V_{\mathrm M}|)\).
Therefore, if \(N_{\mathrm{eval}}\) distinct ratios are
evaluated, the overall search complexity is
\begin{equation}
\label{eq:ratio_search_complexity}
\mathcal O\!\left(
N_{\mathrm{eval}}
\left[
|\mathcal E_{\mathrm P}|
+
(|\mathcal V_{\mathrm M}|+|\mathcal E_{\mathrm M}|)
\log|\mathcal V_{\mathrm M}|
\right]
\right).
\end{equation}
Above all, for a fixed convolutional encoder, the computational complexity of
Algorithm~\ref{alg:ratio_search} grows polynomially with \(K\). However, the product edge enumeration can
still become computationally demanding for large \(K\) or $m$. The proposed graph construction and ratio search are therefore intended as offline design for moderate problem sizes.

\section{Simulation Results}
\label{sec:simulation}

In this section, we evaluate the computational accuracy and decoding
complexity of \ac{HiCoMAC} through Monte Carlo simulations. The
computational accuracy by a fixed $\rho$ is measured by the \ac{NMSE}, defined as
\begin{equation}
\mathrm{\ac{NMSE}_{\rho}}
:=
\frac{
\sum_{j=1}^{N_s}
\left|x_{\Sigma}^{(j)}-\widehat{x}_{\Sigma}^{(j)}\right|^2
}{
N_s\left|x_{\Sigma}^{\max}-x_{\Sigma}^{\min}\right|^2
},
\label{eq:nmse}
\end{equation}
where $x_{\Sigma}^{(j)}$
is the desired function value, $\widehat{x}_{\Sigma}^{(j)}$ is the estimated value in the $j$-th Monte Carlo trial. \(x_{\Sigma}^{\max}\) and \(x_{\Sigma}^{\min}\) are the maximum
and minimum function outputs, respectively, and \(N_s\) is the number
of simulated input blocks. Each user input is independently drawn
from \([0,1]\) and quantized using \(B\) bits. The transmitted sequence is normalized according
to~\eqref{eq:normalized_rectangular_qam}, such that the per-user
average symbol energy over the \(T\) transmissions is \(E_{\mathrm s}=1\)
for every \(\rho\). Accordingly, the per-user \ac{SNR} is defined as
\(\mathrm{SNR}=E_{\mathrm s}/N_0\). The rectangular ratio is searched
over \(\rho\in[0.3,3.0]\). For each SNR point, the simulation terminates
when either \(N_{\min}^{\mathrm{err}}=100\) computational errors have been observed or
\(N_s=10^5\).

\vspace{-1em}

\subsection{Validation of the Log-Max Approximation}

We first validate the log-max approximation using a configuration with $K=2$ users, $B=8$ quantization bits, and a convolutional
encoder $g=(1,3)_8$ with memory order $m=1$, for which the exact MAP criterion can
be evaluated by exhaustively enumerating all terminated histogram paths. As shown in
Fig.~\ref{fig:MAP accuracy}, the log-max \ac{MAP} histogram-state decoder
closely follows the exact MAP decoder over the considered \ac{SNR} range.
The \ac{NMSE} difference becomes negligible
as the \ac{SNR} increases, indicating that the log-max approximation decoding provides a close
approximation to the exact MAP decoding performance for high \ac{SNR} condition.

\begin{figure}[!t]
\centering
\begin{tikzpicture}
    \begin{axis}[
        xlabel = {SNR},
        ylabel = {NMSE},
        label style={font=\footnotesize},
        width=0.37\textwidth,
        height=5cm,
        xmin=3, xmax=15,
        ymin=0.0000000000000000000001, ymax=0.1,
        legend style={nodes={scale=0.65, transform shape}, at={(0.3,0.85)}},
        ticklabel style = {font=\footnotesize},
        legend pos=south west,
        ymajorgrids=true,
        xmajorgrids=true,
        grid style=dashed,
        grid=both,
        ymode = log,
        grid style={line width=.1pt, draw=gray!10},
        major grid style={line width=.2pt,draw=gray!30},
    ]
    \addplot[ smooth,
              tension=0.1,
             thin,
        color=chestnut,
        mark=*,
        line width=0.75pt,
        mark size=1.5pt,
        ]
    table[x=SNR,y=B8]
    {Data/trellis_b_sweep_4qam_nmse.dat};
    \addplot[ smooth,
              tension=0.1,
             thin,
        color=airforceblue,
        mark=asterisk,
        line width=0.75pt,
        mark size=2pt,
        ]
    table[x=SNR,y=B16]
    {Data/trellis_b_sweep_4qam_nmse.dat};
    \addplot[ smooth,
              tension=0.1,
             thin,
        color=cssgreen,
        mark=triangle,
        line width=0.75pt,
        mark size=2pt,
        ]
    table[x=SNR,y=B32]
    {Data/trellis_b_sweep_4qam_nmse.dat};
    \addplot[ smooth,
             thin,
             dashed,
        color=chestnut,
        mark=none,
        line width=0.75pt,
        mark size=2pt,
        ]
    table[x=SNR,y=floor_B8]
    {Data/trellis_b_sweep_4qam_nmse.dat};
    \addplot[ smooth,
             thin,
             dashed,
        color=airforceblue,
        mark=none,
        line width=0.75pt,
        mark size=2pt,
        ]
    table[x=SNR,y=floor_B16]
    {Data/trellis_b_sweep_4qam_nmse.dat};
    \addplot[ smooth,
             thin,
             dashed,
        color=cssgreen,
        mark=none,
        line width=0.75pt,
        mark size=2pt,
        ]
    table[x=SNR,y=floor_B32]
    {Data/trellis_b_sweep_4qam_nmse.dat};
\legend{$B=8$,$B=16$,$B=32$,$B=8$ floor,$B=16$ floor,$B=32$ floor};
\end{axis}
\end{tikzpicture}
\vspace{-1em}
\caption{{\color{black}\ac{NMSE} performance of \ac{HiCoMAC} with 4QAM under different
quantization precisions, together with the corresponding
quantization floors. 
}}
\label{fig:trellis_b_sweep_4qam_nmse}
\end{figure}

\begin{figure}[!t]
\centering
\begin{tikzpicture}
    \begin{axis}[
        xlabel = {SNR},
        ylabel = {NMSE},
        label style={font=\footnotesize},
        width=0.37\textwidth,
        height=5cm,
        xmin=3, xmax=15,
        ymin=0.0000000000000000000001, ymax=0.01,
        legend style={nodes={scale=0.65, transform shape}, at={(0.3,0.85)}},
        ticklabel style = {font=\footnotesize},
        legend pos=south west,
        ymajorgrids=true,
        xmajorgrids=true,
        grid style=dashed,
        grid=both,
        ymode = log,
        grid style={line width=.1pt, draw=gray!10},
        major grid style={line width=.2pt,draw=gray!30},
    ]
    \addplot[ smooth,
             thin,
        color=chestnut,
        mark=*,
        line width=0.75pt,
        mark size=1.5pt,
        ]
    table[x=SNR,y=B8]
    {Data/trellis_b_sweep_16qam_nmse.dat};
    \addplot[ smooth,
              tension=0.1,
             thin,
        color=airforceblue,
        mark=asterisk,
        line width=0.75pt,
        mark size=2pt,
        ]
    table[x=SNR,y=B16]
    {Data/trellis_b_sweep_16qam_nmse.dat};
    \addplot[ smooth,
              tension=0.1,
             thin,
        color=cssgreen,
        mark=triangle,
        line width=0.75pt,
        mark size=2pt,
        ]
    table[x=SNR,y=B32]
    {Data/trellis_b_sweep_16qam_nmse.dat};
    \addplot[ smooth,
             thin,
             dashed,
        color=chestnut,
        mark=none,
        line width=0.75pt,
        mark size=2pt,
        ]
    table[x=SNR,y=floor_B8]
    {Data/trellis_b_sweep_16qam_nmse.dat};
    \addplot[ smooth,
             thin,
             dashed,
        color=airforceblue,
        mark=none,
        line width=0.75pt,
        mark size=2pt,
        ]
    table[x=SNR,y=floor_B16]
    {Data/trellis_b_sweep_16qam_nmse.dat};
    \addplot[ smooth,
             thin,
             dashed,
        color=cssgreen,
        mark=none,
        line width=0.75pt,
        mark size=2pt,
        ]
    table[x=SNR,y=floor_B32]
    {Data/trellis_b_sweep_16qam_nmse.dat};
\legend{$B=8$,$B=16$,$B=32$,$B=8$ floor,$B=16$ floor,$B=32$ floor};
\end{axis}
\end{tikzpicture}
\vspace{-1em}
\caption{{\color{black}\ac{NMSE} performance of \ac{HiCoMAC} with 16QAM under different
quantization precisions, together with the corresponding
quantization floors. 
}}
\label{fig:trellis_b_sweep_16qam_nmse}
\end{figure}

\subsection{Performance of Computational Free Distance Optimization} 

Table~\ref{tab:dcomp_nmse_gain_k2} summarizes the computational
free distance optimization and the resulting NMSE improvement
for different encoder and modulation configurations. The NMSE gain
at SNR $\gamma$ is defined as
\begin{equation}
    G_{\mathrm{NMSE}}(\gamma)
    =
    10\log_{10}
    \left(
    \frac{
        \mathrm{NMSE}_{\rho=1}(\gamma)
    }{
        \mathrm{NMSE}_{\hat{\rho}}(\gamma)
    }
    \right).
    \label{eq:nmse_gain}
\end{equation}
Overall, a larger computational free distance generally leads to improved NMSE performance, since it increases the separation
between computationally distinct paths and reduces the probability of selecting an incorrect sum sequence. The resulting gain depends on the encoder, modulation order, and operating SNR.
When the optimization yields $\hat{\rho}=1$, neither the computational
free distance nor the NMSE is improved. These results show that
$D_{\mathrm{comp}}(\rho)$ provides a useful design criterion across
different coded and modulation configurations.

\vspace{-1em}
\subsection{Performance of System Parameters}

\begin{figure}[!t]
\centering
\begin{tikzpicture}
    \begin{axis}[
        xlabel = {SNR},
        ylabel = {NMSE},
        label style={font=\footnotesize},
        width=0.37\textwidth,
        height=5cm,
        xmin=3, xmax=15,
        ymin=0.0000000000000000000001, ymax=0.1,
        legend style={nodes={scale=0.58, transform shape}, at={(0.3,0.85)}},
        ticklabel style = {font=\footnotesize},
        legend pos=north east,
        ymajorgrids=true,
        xmajorgrids=true,
        grid style=dashed,
        grid=both,
        ymode = log,
        grid style={line width=.1pt, draw=gray!10},
        major grid style={line width=.2pt,draw=gray!30},
    ]
    \addplot[ smooth,
              tension=0.1,
             thin,
        color=chestnut,
        mark=*,
        line width=0.75pt,
        mark size=1.5pt,
        ]
    table[x=SNR,y=R12B16]
    {Data/trellis_rate_bits_m2_nmse.dat};
    \addlegendentry{$R=1/2$, $B=16$}
    \addplot[ smooth,
              tension=0.1,
             thin,
        color=airforceblue,
        mark=asterisk,
        line width=0.75pt,
        mark size=2pt,
        ]
    table[x=SNR,y=R12B32]
    {Data/trellis_rate_bits_m2_nmse.dat};
    \addlegendentry{$R=1/2$, $B=32$}
    \addplot[ smooth,
              tension=0.1,
             thin,
        color=cssgreen,
        mark=triangle,
        line width=0.75pt,
        mark size=2pt,
        ]
    table[x=SNR,y=R14B16]
    {Data/trellis_rate_bits_m2_nmse.dat};
    \addlegendentry{$R=1/4$, $B=16$}
    \addplot[ smooth,
              tension=0.1,
             thin,
        color=cadmiumorange,
        mark=square,
        line width=0.75pt,
        mark size=2pt,
        ]
    table[x=SNR,y=R14B32]
    {Data/trellis_rate_bits_m2_nmse.dat};
    \addlegendentry{$R=1/4$, $B=32$}
\end{axis}
\end{tikzpicture}
\vspace{-1em}
\caption{{\color{black}\ac{NMSE} performance of \ac{HiCoMAC} under different quantization precisions and encoding rates with $K=2$ users. 
}}
\label{fig:trellis_rate_bits_m2_nmse}
\end{figure}

\begin{figure}[!t]
\centering
\begin{tikzpicture}
    \begin{axis}[
        xlabel = {SNR},
        ylabel = {NMSE},
        label style={font=\footnotesize},
        width=0.37\textwidth,
        height=5cm,
        xmin=3, xmax=15,
        ymin=0.0000000000000000000001, ymax=1.8,
        legend style={nodes={scale=0.58, transform shape}, at={(0.3,0.85)}},
        ticklabel style = {font=\footnotesize},
        legend pos=north east,
        ymajorgrids=true,
        xmajorgrids=true,
        grid style=dashed,
        grid=both,
        ymode = log,
        grid style={line width=.1pt, draw=gray!10},
        major grid style={line width=.2pt,draw=gray!30},
    ]
    \addplot[ smooth,
              tension=0.1,
             thin,
        color=chestnut,
        mark=*,
        line width=0.75pt,
        mark size=1.5pt,
        ]
    table[x=SNR,y=R12B16]
    {Data/trellis_rate_bits_m3_nmse.dat};
    \addlegendentry{$R=1/2$, $B=16$}
    \addplot[ smooth,
              tension=0.1,
             thin,
        color=airforceblue,
        mark=asterisk,
        line width=0.75pt,
        mark size=2pt,
        ]
    table[x=SNR,y=R12B32]
    {Data/trellis_rate_bits_m3_nmse.dat};
    \addlegendentry{$R=1/2$, $B=32$}
    \addplot[ smooth,
              tension=0.1,
             thin,
        color=cssgreen,
        mark=triangle,
        line width=0.75pt,
        mark size=2pt,
        ]
    table[x=SNR,y=R14B16]
    {Data/trellis_rate_bits_m3_nmse.dat};
    \addlegendentry{$R=1/4$, $B=16$}
    \addplot[ smooth,
              tension=0.1,
             thin,
        color=cadmiumorange,
        mark=square,
        line width=0.75pt,
        mark size=2pt,
        ]
    table[x=SNR,y=R14B32]
    {Data/trellis_rate_bits_m3_nmse.dat};
    \addlegendentry{$R=1/4$, $B=32$}
\end{axis}
\end{tikzpicture}
\vspace{-1em}
\caption{{\color{black}\ac{NMSE} performance of \ac{HiCoMAC} under different quantization
precisions and encoding rates with $K=2$ users. 
}}
\label{fig:trellis_rate_bits_m3_nmse}
\end{figure}

We then evaluate the effect of the number of quantization bits $B$. Figs.~\ref{fig:trellis_b_sweep_4qam_nmse} and~\ref{fig:trellis_b_sweep_16qam_nmse} show the NMSE performance for the 4\ac{QAM} and 16\ac{QAM} cases with $K=4$ users and memory order $m=2$. The encoder generators are $g=(1,5)_8$ and $g=(1,3,5,7)_8$, and the quantization bits are $B\in\{8,16,32\}$. For both 4QAM and 16QAM, the NMSE decreases as the SNR increases and eventually approaches the computation error floor given in~\eqref{eq:residual_computational_error}, which is the residual NMSE caused by the $B$-bit quantization process. Therefore, a larger $B$ achieves a lower error floor since the inputs are represented with higher precision.
Approximately from
$3$ to $9$ dB for 4\ac{QAM} and from $3$ to $7$ dB for 16\ac{QAM}, the curves with different $B$ are close to each other, where the \ac{NMSE} is mainly caused by incorrect histogram-state trellis decoding decisions. As the SNR increases above $9$ dB, the decoding errors are gradually suppressed. In particular, the $B=8$ curve
saturates first at the highest computation error floor, while the $B=16$ and $B=32$ curves
continue to decrease and approach lower floors, indicating that the quantization precision mainly determines the final \ac{NMSE} once the decoder becomes sufficiently reliable.

Figs.~\ref{fig:trellis_rate_bits_m2_nmse} and
\ref{fig:trellis_rate_bits_m3_nmse} compare the coding rates $R\in\{1/2, 1/4\}$ with $K=2$ users, corresponding to generators $g=(1,3)_8$ and $g=(1,3,5,7)_8$.
Each user input is quantized using $B\in\{16, 32\}$ bits.
For both memory orders $m\in \{1,2\}$, the lower rate configuration $R=1/4$ achieves lower
NMSE over \ac{SNR} approximately from 3 to 12 dB. With the same energy budgets, the higher-order 16QAM mapping
provides more distinct coded modulation points, which can
increase the separation between computationally distinct
aggregated signal paths and thereby improve decoding
reliability.
When the \ac{SNR} further increases, approximately above $12$ dB, the \ac{NMSE} curves approach
their computation error floors,
where the quantization bits determines the final \ac{NMSE}.

\begin{figure}[!t]
\centering
\begin{tikzpicture}
    \begin{axis}[
        xlabel = {SNR},
        ylabel = {NMSE},
        label style={font=\footnotesize},
        width=0.37\textwidth,
        height=5cm,
        xmin=3, xmax=15,
        ymin=0.0000001, ymax=0.005,
        legend style={nodes={scale=0.58, transform shape}, at={(0.3,0.85)}},
        ticklabel style = {font=\footnotesize},
        legend pos=north east,
        ymajorgrids=true,
        xmajorgrids=true,
        grid style=dashed,
        grid=both,
        ymode = log,
        grid style={line width=.1pt, draw=gray!10},
        major grid style={line width=.2pt,draw=gray!30},
    ]
    \addplot[ smooth,
              tension=0.1,
             thin,
        color=chestnut,
        mark=*,
        line width=0.75pt,
        mark size=1.5pt,
        ]
    table[x=SNR,y=K2_NMSE]
    {Data/trellis_K_m2_nmse.dat};
    \addlegendentry{$K=2$}
    \addplot[ smooth,
              tension=0.1,
             thin,
        color=airforceblue,
        mark=asterisk,
        line width=0.75pt,
        mark size=2pt,
        ]
    table[x=SNR,y=K3_NMSE]
    {Data/trellis_K_m2_nmse.dat};
    \addlegendentry{$K=3$}
    \addplot[ smooth,
              tension=0.1,
             thin,
        color=cssgreen,
        mark=triangle,
        line width=0.75pt,
        mark size=2pt,
        ]
    table[x=SNR,y=K4_NMSE]
    {Data/trellis_K_m2_nmse.dat};
    \addlegendentry{$K=4$}
    \addplot[ smooth,
              tension=0.1,
             thin,
        color=cadmiumorange,
        mark=square,
        line width=0.75pt,
        mark size=2pt,
        ]
    table[x=SNR,y=K5_NMSE]
    {Data/trellis_K_m2_nmse.dat};
    \addlegendentry{$K=5$}
\end{axis}
\end{tikzpicture}
\vspace{-1em}
\caption{{\color{black}\ac{NMSE} performance of \ac{HiCoMAC} under different number of users. 
}}
\label{fig:trellis_K_m2_nmse}
\end{figure}

We further evaluate the effect of the number of users in
Fig.~\ref{fig:trellis_K_m2_nmse} for \(K\in\{2,3,4,5\}\), and the encoder uses \(m=2\), \(g=(1,5)_8\), and
\(B=8\) quantization bits. Although minor curve crossings are observed at low SNR, the NMSE
tends to decrease with the number of users as the SNR increases.
When the decoding errors introduced by wireless channels become sufficiently small,
the NMSE is mainly determined by the quantization error and approaches
a floor that decreases approximately as \(1/K\) under the energy
normalization.

\vspace{-1em}
\subsection{Comparison with Existing Digital AirComp Methods}

\begin{figure}[!t]
\centering
\begin{tikzpicture}
    \begin{axis}[
        xlabel = {SNR(dB)},
        ylabel = {NMSE},
        label style={font=\footnotesize},
        width=0.37\textwidth,
        height=5cm,
        xmin=3, xmax=15,
        ymin=0.0000001, ymax=0.5,
        legend style={nodes={scale=0.65, transform shape}, at={(0.3,0.85)}},
        ticklabel style = {font=\footnotesize},
        legend pos=north east,
        ymajorgrids=true,
        xmajorgrids=true,
        grid style=dashed,
        grid=both,
        ymode = log,
        grid style={line width=.1pt, draw=gray!10},
        major grid style={line width=.2pt,draw=gray!30},
    ]
    \addplot[ smooth,
             thin,
        color=chestnut,
        mark=*,
        line width=0.75pt,
        mark size=2pt,
        ]
    table[x=SNR,y=Repetition_ChannelComp]
    {Data/comparison_digital.dat};
    \addplot[ smooth,
             thin,
        color=airforceblue,
        mark=square,
        mark options = {rotate = 180, solid},
        line width=0.75pt,
        mark size=2.5pt,
        ]
    table[x=SNR,y=ReMAC]
    {Data/comparison_digital.dat};
    \addplot[ smooth,
             thin,
        color=cssgreen,
        mark=triangle,
        line width=0.75pt,
        mark size=2.5pt,
        ]
    table[x=SNR,y=HiCoMAC]
    {Data/comparison_digital.dat};
    \addplot[ smooth,
             thin,
        color=cadmiumorange,
        mark=asterisk,
        mark options = {rotate = 180, solid},
        line width=0.75pt,
        mark size=2.5pt,
        ]
    table[x=SNR,y=FSJD]
    {Data/comparison_digital.dat};
\legend{Repetition ChannelComp~\cite{saeed2023ChannelComp} ,ReMAC~\cite{yan2025remac} ,HiCoMAC, FSJD~\cite{you2023broadband}};
\end{axis}
\end{tikzpicture}
\vspace{-1em}
\caption{Performance comparison among Repetition ChannelComp, \ac{ReMAC}, and the \ac{HiCoMAC} method for computing the arithmetic-sum function.}
\label{fig:comparison_digital}
\end{figure}

Fig.~\ref{fig:comparison_digital} compares \ac{HiCoMAC} with
Repetition ChannelComp, \ac{ReMAC}, and the \ac{FSJD}
in~\cite{you2023broadband} for \(K=2\) users. All schemes use
10 transmissions per computation with per-user average transmit
energy \(E_{\mathrm s}=1\). Repetition ChannelComp repeats the same
modulated symbol, while \ac{ReMAC} employs selective transmissions.
For \ac{HiCoMAC}, each input is quantized using \(B=8\) bits and
encoded by a \(R=1/2\) convolutional encoder with memory \(m=2\)
and generators \(g=(1,5)_8\). For \ac{FSJD}, each input is represented by
5 information bits, encoded into 10 coded bits and transmitted using
10 \ac{BPSK} symbols.

Compared with Repetition ChannelComp and \ac{ReMAC}, \ac{HiCoMAC} benefits from the temporal redundancy of convolutional coding and
histogram-state trellis decoding, resulting in lower NMSE as the \ac{SNR} increases. Furthermore, under the same transmission resource, \ac{FSJD} uses a 5-bit representation and is therefore more robust at low \ac{SNR}, whereas \ac{HiCoMAC} retains an 8-bit representation. As the SNR increases, the higher quantization precision of \ac{HiCoMAC} becomes dominant, leading to a lower computation error floor. Hence, \ac{HiCoMAC} achieves higher computation accuracy at high \ac{SNR}, while FSJD provides better robustness at low \ac{SNR}.

\vspace{-1em}
\subsection{Complexity Analysis}

\begin{figure}[!t]
\centering
\begin{tikzpicture}
    \begin{axis}[
        xlabel = {Number of users $K$},
        ylabel = {Trellis transitions per stage},
        label style={font=\footnotesize},
        width=0.37\textwidth,
        height=5cm,
        xmin=1, xmax=7,
        ymin=1, ymax=500000000,
        xtick={1,2,3,4,5,6,7},
        ymode = log,
        legend style={nodes={scale=0.62, transform shape}},
        legend pos=north west,
        ticklabel style = {font=\footnotesize},
        ymajorgrids=true,
        xmajorgrids=true,
        grid style=dashed,
        grid=both,
        grid style={line width=.1pt, draw=gray!10},
        major grid style={line width=.2pt,draw=gray!30},
    ]
    \addplot[ smooth,
             thin,
        color=chestnut,
        mark=*,
        line width=0.75pt,
        mark size=1pt,
        ]
    table[x=K,y=hist_edges_m2]
    {Data/trellis_code_decoder_complexity.dat};
    \addlegendentry{\ac{HiCoMAC}, $m=2$}
    \addplot[ smooth,
             thin,
             dashed,
        color=chestnut,
        mark=*,
        mark options = {rotate = 45, solid},
        line width=0.75pt,
        mark size=1pt,
        ]
    table[x=K,y=joint_edges_m2]
    {Data/trellis_code_decoder_complexity.dat};
    \addlegendentry{Joint-user, $m=2$}
    \addplot[ smooth,
             thin,
        color=airforceblue,
        mark=square,
        mark options = {rotate = 45, solid},
        line width=0.75pt,
        mark size=2pt,
        ]
    table[x=K,y=hist_edges_m3]
    {Data/trellis_code_decoder_complexity.dat};
    \addlegendentry{\ac{HiCoMAC}, $m=3$}
    \addplot[ smooth,
             thin,
             dashed,
        color=airforceblue,
        mark=square,
        mark options = {rotate = 45, solid},
        line width=0.75pt,
        mark size=2pt,
        ]
    table[x=K,y=joint_edges_m3]
    {Data/trellis_code_decoder_complexity.dat};
    \addlegendentry{Joint-user, $m=3$}
\end{axis}
\end{tikzpicture}
\vspace{-1em}
\caption{{\color{black}Number of trellis transitions processed per stage by the histogram-state Viterbi decoder of \ac{HiCoMAC} and the ordered
joint-user decoder. 
}}
\label{fig:trellis_code_decoder_complexity}
\end{figure}

Finally, Fig.~\ref{fig:trellis_code_decoder_complexity} compares the number of
trellis transitions processed per stage by the proposed log-max \ac{MAP} histogram-state
Viterbi decoder and the full-state joint convolutional decoder
in~\cite{you2023broadband}.
For a convolutional encoder with \(N=2^m\) single-user states, the
joint full-state trellis contains \((2N)^K\) transitions
per stage. In contrast, the histogram-state decoder processes $C_{\mathrm H}$ histogram transitions per stage. 
As illustrated in Fig.~\ref{fig:trellis_code_decoder_complexity}, the gap between the two decoders increases as \(K\) grows, particularly
for larger encoder memory. This reduction follows from the permutation
invariance of arithmetic-sum computation, allowing multiple
joint full-state transitions to be represented by a single
histogram-state transition. Consequently, the histogram-state
representation substantially reduces the online branch evaluations by the Viterbi decoder.

\vspace{-1em}
\section{Conclusion}
\label{sec:conclusion}

This paper proposed \ac{HiCoMAC}, a convolutional-coded digital
\ac{AirComp} method for reliable arithmetic-sum computation. By
representing the joint encoder state through the numbers of users
occupying the individual encoder states, HiCoMAC enables efficient log-max \ac{MAP}
histogram-state Viterbi decoding. We further introduced the
computational free distance and developed a modified product graph for
its shortest path evaluation. Based on this distance criterion, the
rectangular ratio of a fixed labeled \ac{QAM} constellation was
optimized through a coarse-to-fine search. Simulation
results demonstrated that \ac{HiCoMAC} improves computation reliability in
terms of both computational free distance and \ac{NMSE}, while
substantially reducing the decoding complexity. Extending
the method to imperfect channel compensation, heterogeneous
encoders, while reducing the offline product graph complexity, constitutes future work.

\appendix




\subsection{Proof of Proposition~\ref{prop:histogram_edge_cardinality}}
\label{proof:histogram_edge_cardinality}
For a fixed source histogram state
\(\bm n^{(i)}\), each component can take
\(n_\sigma^{(i)}+1\) possible values. Since each feasible
input-count vector \(a_{ij,\sigma}^{r}\) uniquely determines the resulting target histogram state through the encoder transition rule, the number of outgoing histogram edges from \(v_i\) is $\prod\nolimits_{\sigma=1}^{N} (n_\sigma^{(i)}+1)$.
Summing over all possible histogram states yields
\begin{equation}
\label{eq:histogram edge summation} 
    C_{\mathrm{ H}}=
    \sum\nolimits_{\substack{
    n_1+\cdots+n_N=K\\
    n_\sigma\geq0}}
    \prod\nolimits_{\sigma=1}^{N}(n_\sigma+1).
\end{equation}
The generating function associated with each factor is
\begin{equation}
    \sum\nolimits_{n=0}^{\infty}(n+1)x^n
    =
    (1-x)^{-2}.
\end{equation}
Therefore, the sum in \eqref{eq:histogram edge summation} is the coefficient of \(x^K\)
in \((1-x)^{-2N}\). By the generalized binomial expansion, we obtain
\begin{equation}
    (1-x)^{-2N}
    =
    \sum\nolimits_{r=0}^{\infty}
    \binom{r+2N-1}{2N-1}x^r.
\end{equation}
Taking the coefficient of \(x^K\) gives
\begin{equation}
    C_{\mathrm H}
    =
    \binom{K+2N-1}{2N-1},
\end{equation}
which completes the proof.

\vspace{-1em}
\subsection{Proof of
Theorem~\ref{thm:shortest_path_characterization}}
\label{app:shortest_path_characterization}

Consider an irreducible computational error event
\((\pi,\widetilde{\pi})\) and its induced modified product path $\omega \in\mathcal P_{\mathrm M}$.
Let \(\zeta^{(h)}\in\mathcal C_{\varepsilon^{(h)}}\)
be the product edge corresponding to the \(h\)-th paired histogram
transition of the event. We then have
\begin{align} \nonumber
    D(\pi,\widetilde{\pi};\rho)
    \geq
    \sum\nolimits_{h=1}^{\nu}
    \min_{\zeta\in\mathcal C_{\varepsilon^{(h)}}}
    D_{\zeta}(\rho) 
    =
    W_{\mathrm M}(\omega;\rho).
    \label{eq:event_to_modified_path_bound}
\end{align}

Since every irreducible computational error event corresponds to a path in
\(\mathcal P_{\mathrm M}\), it follows that
\begin{equation}
    D_{\mathrm{comp}}(\rho)
    \geq
    \min_{\omega\in\mathcal P_{\mathrm M}}
    W_{\mathrm M}(\omega;\rho).
    \label{eq:shortest_path_lower_bound}
\end{equation}

Conversely, consider any
\(\omega
=(\varepsilon^{(1)},\ldots,\varepsilon^{(\nu)})
\in\mathcal P_{\mathrm M}\).
For each \(h\), select a product edge satisfying
\begin{equation}
    \zeta^{(h)}
    \in
    \arg\min_{\zeta\in\mathcal C_{\varepsilon^{(h)}}}
    D_{\zeta}(\rho).
\end{equation}

Since all candidates in
\(\mathcal C_{\varepsilon^{(h)}}\) have the same endpoints in the
modified product graph, the minimum distance associated with a fixed
\(\omega\) is calculated as
\begin{equation}
    \sum\nolimits_{h=1}^{\nu}
    D_{\zeta^{(h)}}(\rho)
    =
    W_{\mathrm M}(\omega;\rho).
\end{equation}
Minimizing this quantity over all
\(\omega\in\mathcal P_{\mathrm M}\) gives
\begin{equation}
    D_{\mathrm{comp}}(\rho)
    \leq
    \min_{\omega\in\mathcal P_{\mathrm M}}
    W_{\mathrm M}(\omega;\rho).
    \label{eq:shortest_path_upper_bound}
\end{equation}
Therefore, combining
\eqref{eq:shortest_path_lower_bound} and
\eqref{eq:shortest_path_upper_bound} leads to
\eqref{eq:shortest_path_characterization}, which concludes the proof.

\bibliographystyle{ieeetr}
\bibliography{main.bbl}

\end{document}